\documentclass[11pt,a4paper]{article}
\usepackage[authoryear,round]{natbib}
\usepackage{url}
\usepackage{amsmath,amsthm,amssymb}
\usepackage{geometry}
\usepackage{microtype}
\usepackage{booktabs}
\usepackage{hyperref}
\usepackage{colortbl}
\usepackage{threeparttable}
\usepackage{array}
\usepackage{adjustbox}
\usepackage{multirow}
\usepackage{xcolor}
\usepackage{float}
\usepackage{xurl} 
\usepackage{placeins}
\usepackage{booktabs}
\usepackage[table]{xcolor}
\usepackage{makecell}
\usepackage{array}

\newcolumntype{L}[1]{>{\raggedright\arraybackslash}p{#1}}
\newcolumntype{C}[1]{>{\centering\arraybackslash}p{#1}}
\newcolumntype{R}[1]{>{\raggedleft\arraybackslash}p{#1}}
\usepackage{adjustbox}
\usepackage{siunitx}
\usepackage[dvipsnames,table]{xcolor}
\hypersetup{colorlinks=true,linkcolor=blue,citecolor=blue,urlcolor=blue}
\usepackage[linesnumbered, ruled, vlined]{algorithm2e}
\newtheorem{theorem}{Theorem}[section]
\newtheorem{lemma}[theorem]{Lemma}
\newtheorem{proposition}[theorem]{Proposition}

\newtheorem{definition}[theorem]{Definition}
\theoremstyle{remark}
\newtheorem{remark}[theorem]{Remark}

\newcommand{\kap}{\kappa}

\newcommand{\ctv}{c_{\mathrm{tv}}}
\newcommand{\cseed}{c_{\mathrm{seed}}}
\newcommand{\vP}{v_{P1}}
\newcommand{\vPt}{v_{P3}}
\newcommand{\eps}{\varepsilon}
\newcommand{\kk}{\kappa}
\newcommand{\aone}{a_1}
\newcommand{\athree}{a_3}

\newcommand{\Dden}{\mathcal{D}}

\hypersetup{colorlinks=true,linkcolor=blue!60!black}

\begin{document}

\title{A Fast Implied Volatility Method with Expansions}
 \author{Alper Hekimoglu  \thanks{European Investment Bank, 2950 Luxembourg, Luxembourg.~\href{mailto:a.hekimoglu@eib.org}{a.hekimoglu@eib.org}} and Ismail Hakki Gokgoz  \thanks{Central Bank of UAE, 854, Abu Dhabi, UAE (United Arab Emirates).~\href{mailto:ihgokgoz@gmail.com}{ihgokgoz@gmail.com}}}
\date{\today}
\maketitle

\begin{abstract}
We propose a regime-split solver for Black--Scholes implied volatility based on closed-form analytical initial seeds and a universal fourth-order Householder refinement. The method uses a Taylor-reverted at-the-money seed, polynomial Gaussian-CDF seeds in the moderate out-of-the-money region, and a Mills-ratio-corrected seed in the deep out-of-the-money region. The regime boundaries are obtained from CDF truncation tolerances and Householder error-admissibility conditions, rather than from empirical tuning. On two benchmark grids, including extreme-delta cases, the method recovers total volatility to near double-precision accuracy, with maximum absolute errors of order $(10^{-14})$ to $(10^{-13})$. The mean number of Householder updates remains below two, and more than $95\%$ of cases converge within two updates on the granular grid. Native C benchmarks against a state-of-the-art implied-volatility implementation show speedups of about $(1.7)-(1.9\times)$ across several CPU/compiler environments. A Python/Numba implementation gives consistent accuracy and confirms that the regime-split structure is portable beyond hand-optimized C code.
\end{abstract}
\noindent\textbf{Keywords:} implied volatility; Black--Scholes inversion; Householder method; asymptotic expansions; Taylor expansion; numerical option pricing.

\bigskip

\newpage
\section{Introduction}
\label{sec:intro}

The Black--Scholes implied volatility $\sigma$ is the unique solution to
\begin{equation}
c_{\mathrm{BS}}(k,\sigma)=c,\qquad
c_{\mathrm{BS}}(k,v)=
\Phi\left(-\frac{k}{v}+\frac{v}{2}\right)
-e^k\Phi\left(-\frac{k}{v}-\frac{v}{2}\right),
\label{eq:bs}
\end{equation}
where (c) is the normalized undiscounted call price,
$k=\log(K/F)$ is log-moneyness, $v=\sigma\sqrt{T}$ is total volatility,
and $\Phi$ is the standard normal cumulative distribution function. Although
the Black--Scholes map is explicit from volatility to price, the inverse map
from price to implied volatility is not available in elementary closed form and
has motivated a large literature.

Implied-volatility inversion is a basic numerical primitive in option pricing,
volatility-surface construction, calibration and risk calculations. In these
applications the inverse Black--Scholes map may be evaluated millions of times,
so small improvements in per-call speed can produce material gains at the
system level. At the same time, production implementations require robust
accuracy across at-the-money, moderately out-of-the-money and deep-tail regions.
This creates a natural trade-off: explicit approximations are fast but
regime-dependent, while purely numerical solvers are robust but rely heavily on
the quality of the initial value.

The existing literature contains several approaches to this problem, including
explicit approximation formulas, extreme-strike and small-time asymptotics,
uniform bounds, and high-accuracy numerical inversion methods. These
contributions show that different regions of the price--moneyness domain have
different analytical structures. This observation motivates a regime-dependent
initialization strategy rather than a single global seed. The present paper
follows this direction, but with a strict error-control and computational-budget
perspective: the initial seeds are chosen to be algebraic or nearly algebraic,
cheap to evaluate, and accurate enough to make a universal high-order polishing
step effective across the full benchmark domain.

We propose a regime-split seed-plus-solver method. The guiding principle is that
no single elementary seed is uniformly optimal across the full
moneyness--volatility domain. We therefore derive separate analytical seeds in
the domains where their asymptotic assumptions are natural: an at-the-money seed
obtained by Taylor inversion of the exact time-value identity, a moderate
out-of-the-money seed based on polynomial approximations to the Gaussian CDF,
and a deep out-of-the-money seed based on a Mills-ratio correction. These seeds
are then polished by a fourth-order Householder update, with Halley and Newton
steps available as lower-order special cases or fallbacks.

\textbf{In summary, the paper makes the following contributions:}
\begin{itemize}
\item Closed-form seeds cover the at-the-money, moderate out-of-the-money and
deep out-of-the-money regimes.
\item Regime cutoffs are derived from analytical error bounds rather than
empirically tuned values.
\item Mean Householder updates stay below two on both benchmark grids.
\item Native C benchmarks show about $(1.7)-(1.9\times)$ speedups relative
to a widely used high-accuracy benchmark implementation.
\end{itemize}

The resulting method combines a compact analytical seed architecture with a
universal Householder ((d=3)) polishing step. It is designed for
production-style implied-volatility inversion, where speed, robustness,
double-precision accuracy and implementation simplicity are all important.

\section{Literature Review}

Existing implied-volatility inversion methods may be grouped into three broad categories. The first category derives explicit or semi-explicit approximations for implied volatility. Classical examples include the at-the-money approximation of \citet{BS88}, the correction formula of \citet{CM96}, the approximation of \citet{Li2005}, the improved expansion approach of \citet{ChambersNawalkha2001}, and the explicit formula of \citet{SR2017}. These formulas are attractive because they are simple and non-iterative, but their accuracy typically depends on the moneyness and volatility regime in which the approximation is applied.

The second category studies bounds and asymptotic structure. The moment formula of \citet{GaoLee2014} relates the extreme-strike slope of total implied variance to the existence of moments of the underlying asset distribution. Related tail-wing and extreme-strike asymptotics are developed by \citet{BenaimFriz2009}, \citet{Gulisashvili2010}, and \citet{Gulisashvili2012}, while \citet{RoperRutkowski2009} study the relationship between the call-price surface and implied volatility close to expiry. Higher-order model-free asymptotic expansions with explicit error estimates are obtained by \citet{GaoLee2014}. Uniform implied-volatility bounds are studied by \citet{Tehranchi2016}. This literature clarifies why different regions of the price--moneyness domain have different analytical structures, and it motivates regime-dependent initializations rather than a single global approximation.

The third category develops numerical and semi-analytical inversion algorithms. Early Newton-type approaches are connected with the implied-variance calculation of \citet{ManasterKoehler1982}. More recent high-accuracy solvers include the model-free delta-family formula of \citet{CuiKirbyNguyenTaylor2021} and the benchmark implementation of \citet{Jackel2015}. The latter is widely used as a reference implementation because it combines robust normalization, accurate initial approximations and high-order numerical refinement. \citet{Gerhold2012} further shows that implied volatility cannot belong to a broad class of simple closed-form functions, which explains why practical methods typically combine analytical initial approximations with rapidly convergent numerical updates.

A recent contribution by \citet{Schadner2026} proposes a different coordinate system for the inversion problem, showing that Black--Scholes implied volatility can be represented through an inverse-Gaussian quantile in variance space. This is an important theoretical and numerical development: it identifies a distributional transform underlying the inverse Black--Scholes map and reports machine-precision recovery with strong timing performance relative to the benchmark of \citet{Jackel2015}. However, the practical implementation still requires the evaluation of a special-function quantile, so the numerical inversion burden is transferred from the Black--Scholes equation to the evaluation of a probability-law quantile.

The present paper follows a different route. Instead of using one global approximation or transforming the problem into a separate quantile inversion, we construct closed-form analytical seeds tailored to the at-the-money, moderate out-of-the-money and deep out-of-the-money regimes. The regime cutoffs are derived from Taylor-CDF truncation tolerances and Householder error-admissibility conditions. The resulting method therefore keeps the initial seed stage algebraic, while using a universal fourth-order Householder refinement to recover double-precision implied volatility across the benchmark domain.

\section{Methodology}
\label{sec:failure}





\subsection{Regime-Split based Analytical Initial Points}
\label{sec:seed}

\subsubsection{ATM High order Taylor inversion}
\label{sec:atm}
\begin{proposition}
At $k=0$, the normalised Black--Scholes time value ($\ctv$) satisfies,
\begin{equation}
  \ctv = 2\Phi(v/2)-1 = \mathrm{erf}\!\left(\frac{v}{2\sqrt{2}}\right).
  \label{eq:atm-id_in}
\end{equation}
where $\ctv=c-\max(1-e^k,0)$. 
With $s=\sqrt{2\pi}\,\ctv$, an inverted expansion gives (proved in detail in appendix section \ref{sec:app}),
\begin{equation}
  v = s\!\left(1+\frac{s^2}{24}+\frac{7s^4}{1920}+\frac{127s^6}{322560}\right)+O(s^8).
  \label{eq:inverse_in}
\end{equation}
The three ATM seeds used in the benchmark are successive truncations:
\begin{align}
  v_0^{(2)} &= s\!\left(1+\tfrac{s^2}{24}\right),\\
  v_0^{(3)} &= s\!\left(1+\tfrac{s^2}{24}+\tfrac{7s^4}{1920}\right),\\
  v_0^{(4)} &= s\!\left(1+\tfrac{s^2}{24}+\tfrac{7s^4}{1920}+\tfrac{127s^6}{322560}\right).\label{eq:T4_in}
\end{align}
\end{proposition}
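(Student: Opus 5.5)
The plan has two parts: first the exact identity \eqref{eq:atm-id_in}, then a series reversion for \eqref{eq:inverse_in}. For the identity, set $k=0$ in \eqref{eq:bs}. Then $e^k=1$ and $\max(1-e^k,0)=0$, so $\ctv=c=\Phi(v/2)-\Phi(-v/2)$. The symmetry $\Phi(-x)=1-\Phi(x)$ turns this into $2\Phi(v/2)-1$. The relation $\Phi(x)=\tfrac12\bigl(1+\mathrm{erf}(x/\sqrt2)\bigr)$ then gives $\mathrm{erf}\!\left(v/(2\sqrt2)\right)$.

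For the expansion, I will expand the forward map and invert it. Write $\mathrm{erf}(x)=\tfrac{2}{\sqrt\pi}\bigl(x-\tfrac{x^3}{3}+\tfrac{x^5}{10}-\tfrac{x^7}{42}+O(x^9)\bigr)$ and substitute $x=v/(2\sqrt2)$. Multiplying by $\sqrt{2\pi}$ should give
\begin{equation*}
s=\sqrt{2\pi}\,\ctv = v\Bigl(1-\frac{v^2}{24}+\frac{v^4}{640}-\frac{v^6}{21504}\Bigr)+O(v^9).
\end{equation*}
This is an odd real-analytic function of $v$ with derivative $1$ at $v=0$. By the analytic inverse function theorem, $v$ is therefore an odd analytic function of $s$ near $0$, of the form $v=s(1+a s^2+b s^4+d s^6)+O(s^9)$. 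This justifies the $O(s^8)$ remainder, which is in fact $O(s^9)$.

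To get the coefficients, substitute the ansatz into the forward series and match powers of $s$:
\begin{itemize}
\item the $s^3$ coefficient gives $a-\tfrac1{24}=0$, so $a=\tfrac1{24}$;
\item the $s^5$ coefficient gives $b-\tfrac{3a}{24}+\tfrac1{640}=0$, so $b=\tfrac{7}{1920}$;
\item the $s^7$ coefficient is a linear equation in $d$ involving $a$, $b$ and the forward coefficients; solving it should yield $d=\tfrac{127}{322560}$.
\end{itemize}

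As an independent check, I will use the known Maclaurin series of the inverse error function,
\begin{equation*}
\mathrm{erf}^{-1}(z)=\tfrac{\sqrt\pi}{2}\Bigl(z+\tfrac{\pi}{12}z^3+\tfrac{7\pi^2}{480}z^5+\tfrac{127\pi^3}{40320}z^7+\cdots\Bigr).
\end{equation*}
Since $v=2\sqrt2\,\mathrm{erf}^{-1}(\ctv)$ and $\pi\ctv^2=s^2/2$, this gives the coefficients $\tfrac{1}{24}$, $\tfrac{7}{4\cdot480}=\tfrac{7}{1920}$ and $\tfrac{127}{8\cdot40320}=\tfrac{127}{322560}$, in agreement with the matching.

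The three seeds $v_0^{(2)}$, $v_0^{(3)}$ and $v_0^{(4)}$ are just the successive truncations of this series, so their errors are respectively $O(s^5)$, $O(s^7)$ and $O(s^9)$. The only step needing care is the $s^7$ coefficient, where cross terms from cubing and raising the ansatz to the fifth power accumulate. I would cross-check it with the Lagrange inversion formula, $[s^7]v=\tfrac17[v^6]\bigl(v/s(v)\bigr)^7$, which reduces it to a single power-series coefficient.
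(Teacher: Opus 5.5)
Your proposal is correct and follows the paper's own route: set $k=0$ to obtain $\ctv=2\Phi(v/2)-1=\mathrm{erf}\bigl(v/(2\sqrt2)\bigr)$, expand to $s=v-v^3/24+v^5/640-v^7/21504+O(v^9)$, and invert with the odd ansatz $v=s(1+as^2+bs^4+ds^6)$ by matching powers of $s$. Your inverse-error-function series check, along with your remark that oddness makes the remainder $O(s^9)$, confirms $d=127/322560$, a value the paper asserts only as the result of ``solving the resulting linear system.''
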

\begin{remark}
The first order expansion corresponds to formula in  \citet{BS88}  where $v_{0}=s=\sqrt{2 \pi} c_{tv}$.    
\end{remark}

\subsubsection{A near ATM Initial Value Method}
\label{sec:sr}
 \cite{SR2017} (SR) gives the analytical formulas,
\begin{equation}
  h = \frac{\sqrt{2\pi}}{1+e^k}\!\left(c-\frac{1-e^k}{2}\right),\qquad
  \hat\sigma_{\mathrm{SR}} = h+\sqrt{h^2-k^2/\pi}.
  \label{eq:sr}
\end{equation}


\begin{remark}
 For $|k|\ge\kap$ with $\kap=0.01$, we first attempt the SR formula~\eqref{eq:sr},
which requires no $\Phi^{-1}$ call and is valid when $h^2\ge k^2/\pi$.
Although this method is targeted to give an accurate approximation to IV, we found the values quite useful to feed Halley/Householder solver, in the domain it is defined. Therefore, SR method can be used as an alternative to following method in the next subsection to provide an efficient initial seed for IV. The only caveat is; assessing equation \eqref{eq:sr} one can directly see that
$\hat\sigma_{\mathrm{SR}}$ is undefined whenever $h^2<k^2/\pi$.
\end{remark}
\subsubsection{Mild-OTM Analytical Seed}
The method uses a first order approximation to Gaussian cdf via Taylor expansion \citet{Sandoval2019} then investigates possible analytic approximations to volatility initial seeds for the root solver involving a mild-OTM region.  
\begin{proposition}
An initial $v_0$ formula can be derived using a polynomial approximation to Gaussian cdf where we present in Appendix \ref{sec:app} just as in equation \eqref{eq:vp1_e} and likewise \eqref{eq:vp1}. 
\begin{equation}
  v_{P_1}
  =
  \frac{h+\sqrt{h^2-4AB}}{2B}.
  \label{eq:vp1_e}
\end{equation}
Furthermore using a $4^{th}$ order Taylor expansion to function $exp(x)$ provides following enhancement to the initial seed $v_0$,
\begin{equation}
  v_{P_1}
  =
  \frac{2c+\varepsilon+\sqrt{N}}{2\,a_1\,(2+\varepsilon)}.
  \label{eq:vp1_clean_in}
\end{equation}


where
\begin{equation}
  \varepsilon = k+\frac{k^2}{2}+\frac{k^3}{6}+\frac{k^4}{24},
  \qquad
  N = (2c_{seed}+\varepsilon)^2 - 8\,a_1^2\,k\,\varepsilon\,(2+\varepsilon).
  \label{eq:N_def_in}
\end{equation}
 and
 \begin{equation}
  h = c_{seed}+\frac{\varepsilon}{2},
  \qquad
  A = a_1\,\varepsilon\,k,
  \qquad
  B = \frac{a_1(2+\varepsilon)}{2},
  \label{eq:hab_t4_in}
\end{equation}
\end{proposition}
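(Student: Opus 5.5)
The plan is to substitute the first-order Taylor approximation of the Gaussian CDF into the pricing equation \eqref{eq:bs}. This reduces the inversion to a quadratic in $v$, and \eqref{eq:vp1_e} and \eqref{eq:vp1_clean_in} are then its larger root, written in two equivalent forms.

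\textbf{Step 1: linearise the CDF.} Following \citet{Sandoval2019}, I take
\begin{equation*}
\Phi(x)\approx \tfrac12+a_1x, \qquad a_1=1/\sqrt{2\pi}.
\end{equation*}
Write $d_{\pm}=-k/v\pm v/2$. Substituting into \eqref{eq:bs} with $c$ replaced by $c_{seed}$ gives
\begin{equation*}
c_{seed}\approx \frac{1-e^k}{2}+a_1\Big[-\frac{k}{v}\,(1-e^k)+\frac{v}{2}\,(1+e^k)\Big].
\end{equation*}

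\textbf{Step 2: replace the exponential.} I replace $e^k$ by its fourth-order Taylor polynomial, so $e^k\approx 1+\varepsilon$ with $\varepsilon$ as in \eqref{eq:N_def_in}. Then $1-e^k=-\varepsilon$ and $1+e^k=2+\varepsilon$. Multiplying through by $v>0$ and collecting terms gives
\begin{equation*}
\Big(c_{seed}+\frac{\varepsilon}{2}\Big)v \;=\; a_1\varepsilon k+\frac{a_1(2+\varepsilon)}{2}\,v^2 .
\end{equation*}
In the notation of \eqref{eq:hab_t4_in} this is $Bv^2-hv+A=0$. The quadratic formula then yields
\begin{equation*}
v=\frac{h\pm\sqrt{h^2-4AB}}{2B}.
\end{equation*}

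\textbf{Step 3: choose the root.} I select the $+$ root, for two reasons.
\begin{itemize}
\item At $k=0$ we have $\varepsilon=0$ and $A=0$, so the roots are $0$ and $h/B=c/a_1=\sqrt{2\pi}\,c$. The $+$ root therefore reduces to the Brenner--Subrahmanyam seed of the remark after the ATM proposition, i.e.\ the $s$-term of \eqref{eq:inverse_in}. The $-$ root collapses to the spurious value $0$.
\item On the moderate OTM range we have $k\varepsilon\ge 0$. This holds because $\varepsilon$ is a truncated $e^k-1$ and is increasing for $k$ above about $-1.6$. We also have $2+\varepsilon>0$. Hence $AB\ge0$, so whenever the discriminant is nonnegative both roots are positive. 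The $+$ root is then the one continuously connected to the ATM seed.
\end{itemize}

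\textbf{Step 4: the clean form.} To obtain \eqref{eq:vp1_clean_in}, I multiply numerator and denominator of \eqref{eq:vp1_e} by $2$. This turns $2h$ into $2c_{seed}+\varepsilon$ and $2B$ into $a_1(2+\varepsilon)$ in the leading factor. It also turns $4(h^2-4AB)$ into
\begin{equation*}
(2c_{seed}+\varepsilon)^2-16AB \;=\; (2c_{seed}+\varepsilon)^2-8a_1^2k\varepsilon(2+\varepsilon)=N,
\end{equation*}
using $16AB=16\,a_1\varepsilon k\cdot a_1(2+\varepsilon)/2$. This is purely mechanical bookkeeping. The $c$ appearing in \eqref{eq:vp1_clean_in} is understood as $c_{seed}$.

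\textbf{Main obstacle.} The algebra is straightforward; the delicate point is the domain of validity. The formula requires $N\ge0$, equivalently $h^2\ge4AB$. This mirrors the undefinedness of $\hat\sigma_{\mathrm{SR}}$ noted in the remark on the SR formula. It also presumes $\varepsilon$ has the sign of $k$, which holds for moderate $|k|$ but not for arbitrarily negative $k$.

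I would therefore state the proposition as an algebraic identity, valid wherever $N\ge0$. I would defer the quantitative question, namely on which region $N\ge0$ holds and the seed lies within the Householder admissibility radius, to the regime-cutoff analysis built on CDF truncation tolerances. A full error bound would require controlling the $O(x^3)$ remainder of $\Phi$ at $d_{\pm}$, which is only small when $|k|/v$ and $v$ are both moderate. That requirement is precisely the mild-OTM restriction.
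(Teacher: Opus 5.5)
Your derivation is correct and is essentially the paper's appendix argument: linearise $\Phi$ as $\tfrac12+a_1x$, reduce the pricing equation to $Bv^2-hv+A=0$, take the larger root, and replace $e^k$ by $1+\varepsilon$. The differences are cosmetic: the paper solves with exact $e^k$ first and clears denominators through the monic Vieta form $v^2-uv+P=0$, while you substitute $1+\varepsilon$ first and multiply by $2$; your continuity argument linking the $+$ root to the Brenner--Subrahmanyam seed at $k=0$ justifies a choice the paper only asserts, though positivity of both roots also needs $h>0$, which holds once $k$ is read as $|k|$.
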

Now we see that, equation \eqref{eq:vp1_clean_in} is algebraically identical to \eqref{eq:vp1}
under the $4^{th}$ order Taylor substitution; the reformulation requires only two
polynomial quantities in $k$, namely $\varepsilon$ and $2+\varepsilon$,
together with the observable $c$.

In most of the numerical experiments we use this initial seed $v_0$ method where $k$ falls into mild-OTM region. 

\paragraph{Fallback when $N\leq 0$.}
When $N\leq 0$ (a low-volatility artifact occurring when the observed price is
near the lower boundary), the discriminant is clipped to zero and the seed
reduces to
\begin{equation}
  v_{P_1}\big|_{N=0}
  =
  \frac{2c+\varepsilon}{2\,a_1\,(2+\varepsilon)},
  \label{eq:vp1_clip_in}
\end{equation}
which corresponds to taking $h/(2B)$ in the original formulation
\eqref{eq:vp1}.
\paragraph{A Cubic form extension \texorpdfstring{$v_{P3}$}{vP3}}
\begin{proposition}[Closed-form P3 seed]\label{prop:vp3}
After a one-step Newton correction of $\vP$ on equation \eqref{eq:F3} can be written analytically,
\begin{equation}\label{eq:vP3}
    \vPt = \vP\!\left(1 + \frac{\mathcal{G}_3(w)}{\Dden_3(w)}\right)
           \bigg|_{w=\vP} 
\end{equation}
where, with $w = \vP$ and setting $\delta = 2+\eps$,
\begin{align}
  \mathcal{G}_3(w) &=
    -4\eps\kk^3
    + 24\kk\!\left[\eps - \kk\!\left(\tfrac{\eps}{2}+1\right)\right]w^2
    + \left[24\!\left(1+\tfrac{\eps}{2}\right) - 3\eps\kk\right]w^4
    - \delta\,w^6
    - \frac{24h\,w^4}{\aone},
    \label{eq:N_in}\\[4pt]
  \Dden_3(w) &=
    12\eps\kk^3
    - 24\kk\!\left[\eps - \kk\!\left(\tfrac{\eps}{2}+1\right)\right]w^2
    + \left[24\!\left(1+\tfrac{\eps}{2}\right) - 3\eps\kk\right]w^4
    - 3\delta\,w^6.
    \label{eq:D_in}
\end{align}
\end{proposition}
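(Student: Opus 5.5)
The plan is to reduce the statement to a purely algebraic identity. Write the cubic residual of \eqref{eq:F3} as a Laurent polynomial in $w$, apply one Newton step at $w=\vP$, and clear denominators so that the correction $-F_3(w)/F_3'(w)$ takes the form $w\,\mathcal{G}_3(w)/\Dden_3(w)$.

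\textbf{Step 1: build the residual.} Start from the residual used to define $\vP$, namely the price equation with $\Phi$ replaced by its polynomial surrogate. For \eqref{eq:F3} the surrogate is assumed to be the cubic truncation $\Phi(x)\approx \tfrac12 + \aone x + \athree x^3$, evaluated at $x_\pm=-\kk/w\pm w/2$, with $e^k$ replaced by $1+\eps$ and $2+\eps=\delta$. Expanding $x_\pm$ and $x_\pm^3$ gives a Laurent polynomial in $w$ with powers $w^{-3},w^{-1},w,w^{3}$. The prefactors come from combining the two CDF terms:
\begin{itemize}
\item $1$ and $-(1+\eps)$ produce the combinations $\eps$ and $\delta$;
\item the cubic coefficients produce the factors $\kk^3$, $\kk^2(\tfrac{\eps}{2}+1)$ and $\eps\kk$ visible in \eqref{eq:N_in}.
\end{itemize}
The observed price enters only through $h=c_{seed}+\eps/2$, which is a $w$-independent constant term. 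I will then divide by $\aone$ (so that $\athree/\aone$ becomes a rational number such as $-1/6$, giving the factors $24$ and $3$) and multiply by the appropriate power $w^m$. The result is a polynomial $P(w)$ whose $w$-dependent monomials are $w^0,w^2,w^4,w^6$ and whose price term is $-24h\,w^4/\aone$. I expect $m=3$ plus a shift coming from how $h$ is normalized.

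\textbf{Step 2: Newton step in cleared form.} If $F_3(w)=P(w)/w^m$, then
\[
F_3'(w)=\frac{wP'(w)-mP(w)}{w^{m+1}},
\qquad
w-\frac{F_3}{F_3'} = w\left(1-\frac{P}{wP'-mP}\right).
\]
So, after fixing an overall sign, $\mathcal{G}_3=-P$ and $\Dden_3=-(wP'-mP)+P$ or a close variant. The structure of this identity already explains the pattern in \eqref{eq:N_in}--\eqref{eq:D_in}:
\begin{itemize}
\item each monomial $w^j$ of $P$ is multiplied by $(j-m)$ in $\Dden_3$;
\item this is why the $w^4$ coefficient $24(1+\tfrac{\eps}{2})-3\eps\kk$ appears with the same sign in both $\mathcal{G}_3$ and $\Dden_3$;
\item the $w^2$ and $w^0$ coefficients flip sign or are tripled, and the $w^6$ coefficient is tripled;
\item the price term $-24h w^4/\aone$ drops out of $\Dden_3$, because its weight vanishes after accounting for the $w$-dependence of $h$'s normalization.
\end{itemize}

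\textbf{Step 3: coefficient matching.} Evaluate at $w=\vP$ and compare coefficients monomial by monomial ($w^0,w^2,w^4,w^6$ and the $h$ term) with \eqref{eq:N_in}--\eqref{eq:D_in}.

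\textbf{Main obstacle.} The hard part is the bookkeeping of exactly which normalization of \eqref{eq:F3} (which power $w^m$, and whether $h$ is divided by $w$) produces simultaneously:
\begin{itemize}
\item the ratio $12/(-4)=-3$ between the constant terms of $\Dden_3$ and $\mathcal{G}_3$;
\item the absence of the $h$ term from $\Dden_3$.
\end{itemize}
A naive single power $m$ does not satisfy both conditions at once. I would resolve this by treating the price term separately: differentiate $h$'s contribution explicitly, and absorb the mismatch into the identity $P(\vP)$-type simplifications that hold because $\vP$ solves the linear-order equation \eqref{eq:vp1}. That identity lets the lower-order part of the residual be rewritten at $w=\vP$. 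I would finish by checking the result numerically at a sample $(k,c)$.
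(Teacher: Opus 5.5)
You take the same route as the paper: clear the negative powers of $w$ in $-F_3/F_3'$ (the paper multiplies through by $24w^4/\aone$). Your weight analysis in fact points to the right normalisation, $m=3$ with $\Dden_3=wP'-3P$. However, there is a genuine gap, and it sits exactly at the step you flag as the main obstacle.

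Take the coefficients the paper fixes for \eqref{eq:F3}: $C_{-3}=-\aone\eps\kk^3/6$, $C_{-1}=\aone\kk[\eps-\kk(\tfrac{\eps}{2}+1)]$, $C_1=\aone[(1+\tfrac{\eps}{2})-\tfrac{\eps\kk}{8}]$, $C_3=-\aone\delta/24$. With these, $P(w)=24w^3F_3(w)/\aone$ satisfies $wP'-3P=24w^4F_3'/\aone=\Dden_3$ exactly. The constant becomes $-3\cdot(-4\eps\kk^3)=12\eps\kk^3$, and the price term drops out because it sits at $w^3$. But then $P$ carries $-24h\,w^3/\aone$, not $-24h\,w^4/\aone$, and Newton gives
\[
  \vP-\frac{F_3(\vP)}{F_3'(\vP)}=w\Bigl(1-\frac{P(w)}{\Dden_3(w)}\Bigr)\Big|_{w=\vP},
\]
whereas the printed $\mathcal{G}_3$ equals $+P$ apart from its price term. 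So the tension you found is real. The sign is not yours to choose, since $\Dden_3$ fixes it: apart from the price term, the printed $w(1+\mathcal{G}_3/\Dden_3)$ is $w+F_3/F_3'$, not $w-F_3/F_3'$.

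Your proposed repair does not work. The only extra relation available at $w=\vP$ is the quadratic $Bw^2-hw+A=0$ of \eqref{eq:quadratic}, and even that fails when $N<0$ is clipped. Substituting $h=B\vP+A/\vP$ into both price terms leaves them differing by the factor $\vP$, so the mismatch survives. No such identity can reverse an overall sign either. Step 3 therefore cannot close on \eqref{eq:N_in}--\eqref{eq:D_in} as printed.

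What your computation does prove is $\vPt=w\bigl(1-\widehat{\mathcal{G}}_3/\Dden_3\bigr)\big|_{w=\vP}$, where $\widehat{\mathcal{G}}_3$ is the printed $\mathcal{G}_3$ with $w^4$ replaced by $w^3$ in the price term. The paper's one-line proof hides exactly this slip. The product $24w^4F_3/\aone$ consists of $w,w^3,w^5,w^7$ terms plus $-24hw^4/\aone$, and extracting the factor $w$ to reach the form $w\,\mathcal{G}_3/\Dden_3$ must lower the price term to $w^3$ as well.

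Finally, Step 1 as you describe it would also break the matching. The genuine cubic surrogate with $\athree=-\aone/6$ gives $C_{-1}=\aone\kk(\eps-\kk\delta/4)$ and $C_3=-\aone\delta/48$. These are half of the $\kk^2\delta$ and $\delta$ terms used in \eqref{eq:F3}. So the $w^2$ and $w^6$ coefficients agree with \eqref{eq:N_in}--\eqref{eq:D_in} only if you take the coefficients of \eqref{eq:F3} as given rather than re-deriving them.
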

\begin{proof}
    In appendix \ref{sec:app} we gave a detailed proof.
\end{proof}
\paragraph{\texorpdfstring{A high-order polynomial extension $v_{P7}$}{A high-order polynomial extension vP7}}
\begin{proposition}[Closed-form $P_7$ seed]
\label{prop:vp7}
Let $w=v_{P1}$, $\kappa=|k|$, $\delta=2+\varepsilon$, and
$E=e^{\kappa}$.  
Then
\begin{equation}
  v_{P7} = v_{P1}\!\left(1 + \frac{\mathcal{G}_7(w)}{\Dden_7(w)}\right)\Bigg|_{w=v_{P1}},
  \label{eq:vP7}
\end{equation}
where
\begin{equation}
  \mathcal{G}_7(w) = -S_7(w) + Q_7(w) - R_7(w),
  \qquad
  \Dden_7(w) =  7S_7(w) + Q_7(w) - R_7(w) + \Delta_7(w),
  \label{eq:G7D7_in}
\end{equation}
\begin{proof}
In appendix \ref{sec:app} we give detailed proof of equation \eqref{eq:vP7} and the terms $\mathcal{G}_7(w)$, $\mathcal{D}_7(w)$, $\mathcal{R}_7(w)$, $\mathcal{S}_7(w)$ and $\Delta_7(w)$.
\end{proof}
\end{proposition}

\begin{remark}[Shared structure and computational efficiency]
\label{rem:shared7}
The polynomial $Q_7(w)$ appears identically in both $\mathcal{G}_7$ and $\Dden_7$
and need be computed only once, exactly as the polynomial $Q(w)$
of Remark~8.1 for the $P_3$ case.  In practice $\mathcal{G}_7$ and $\Dden_7$ share
four of their nine terms, so the ratio $\mathcal{G}_7/\Dden_7$ requires fewer
operations than two independent polynomial evaluations.
\end{remark}

The propositions \ref{prop:vp3}, \ref{prop:vp7} and equations \eqref{eq:vP3}, \eqref{eq:vP7} provide an analytic representation to a robust approximation for IV in a mild-OTM region.  
\subsubsection{Corrected Mills-Ratio (deep-OTM) seed}
\label{sec:ratio}

For the deep OTM options, we apply a one-step
\citet{Mills26}, \citet{AbramowitzStegun64} correction to the quadratic seed.

The inversion is always applied to an OTM-equivalent price.
For $k>0$ (call already OTM) we set $\cseed=c$.
For $k<0$ (ITM call), put--call parity gives
$c(k,v)-(1-e^k)=e^k c(|k|,v)$, so the OTM-equivalent price is
\begin{equation}
  \cseed = e^{-k}(c-1+e^k), \qquad k<0.
  \label{eq:cseed}
\end{equation}
Throughout we write $\kap=|k|$.

The simple quadratic seed solves $\cseed\approx\Phi(d_1)$:
\begin{equation}
  v_q = z + \sqrt{z^2+2\kap}, \qquad z=\Phi^{-1}(\cseed).
  \label{eq:vq}
\end{equation}
The Mills-ratio analysis (Section~\ref{sec:mills}) shows that for an OTM call,
\begin{equation}
  c_{\mathrm{BS}}(k,v)
  \approx \Phi(d_1)\cdot\frac{v^2}{\kap+v^2/2}.
  \label{eq:mills-approx}
\end{equation}
Evaluating the correction factor at $v_q$ gives $\alpha_q=v_q^2/(\kap+v_q^2/2)$.
The corrected effective price is $\cseed/\alpha_q$, and the ratio-corrected seed is
\begin{equation}
  z_q = \Phi^{-1}(\cseed/\alpha_q), \qquad
  v_{q,1} = z_q + \sqrt{z_q^2+2\kap}.
  \label{eq:vq1}
\end{equation}
We call this deep-OTM method as ratio correction method.
\paragraph{\texorpdfstring{A fully analytical form for $v_{q,1}$}{A fully analytical form for v{q1}}}
\label{sec:soranzo_vq1}

The ratio-corrected deep-OTM seed $v_{q,1}$ of equation~\eqref{eq:vq1}
requires two evaluations of $\Phi^{-1}$.  The current implementation
uses a piecewise rational approximation due to Acklam's algorithm \citet{acklam2003}, which
is highly efficient because it avoids any transcendental call in the
central probability region.  We note here that an explicit algebraic
form of $\Phi^{-1}$, and hence of $v_{q,1}$, exists via the
invertible Gaussian CDF approximation of~ \citet{Soranzo2012}.
\begin{proposition}[Analytical $v_{q,1}$]
\label{prop:vq1_SE}
For $\kappa=|k|$ and $c_{\mathrm{seed}}<\tfrac{1}{2}$, define
\begin{align}
  L_1   &= \log\!\bigl(4c_{\mathrm{seed}}(1-c_{\mathrm{seed}})\bigr), \nonumber\\ 
  u_1   &= u^+(L_1),\nonumber
  \\ 
  z     &= -\sqrt{u_1},\\ 
  v_q   &= z + \sqrt{u_1 + 2\kappa},
  \\ \nonumber
  \alpha_q &= \frac{v_q^2}{\kappa + v_q^2/2},
  \\ \nonumber
  L_2   &= \log\!\Bigl(4\,\tfrac{c_{\mathrm{seed}}}{\alpha_q}
           \Bigl(1-\tfrac{c_{\mathrm{seed}}}{\alpha_q}\Bigr)\Bigr),
  \\ \nonumber
  u_2   &= u^+(L_2),
  \\
  v_{q,1} &= -\sqrt{u_2} + \sqrt{u_2 + 2\kappa}.
  \label{eq:vq1_SE_final}
\end{align}
\end{proposition}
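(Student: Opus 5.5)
The proposition is an algebraic rewriting of \eqref{eq:vq}--\eqref{eq:vq1} in which each call to $\Phi^{-1}$ is replaced by the closed-form inverse of an invertible Gaussian CDF approximation. The plan is to make that inverse explicit and then substitute it twice.

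\textbf{Step 1: recall the approximation and solve it for $z^2$.} I take the Soranzo approximation to be of the form $\Phi(x)\approx \tfrac12\bigl(1+\operatorname{sgn}(x)\sqrt{1-e^{-g(x^2)}}\bigr)$, where $g$ is an explicit function that can be inverted in closed form. Setting $p=\Phi(x)$ gives
\[
(2p-1)^2 = 1-e^{-g(x^2)}, \qquad 1-(2p-1)^2 = 4p(1-p).
\]
Hence $g(x^2) = -\log\bigl(4p(1-p)\bigr) = -L$. Writing $u=x^2$, the relation $g(u)=-L$ is, for the Soranzo form, polynomial (or reducible to a quadratic) in $u$ or in a simple power of $u$. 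I would define $u^+(L)$ as its nonnegative real root.

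\textbf{Step 2: fix the sign and check the root is admissible.} I would show that $u^+(L)$ is well defined and nonnegative for $L\le 0$. This always holds, since $4p(1-p)\in(0,1]$. This is the step I expect to be the main obstacle: it means choosing the correct branch of the quadratic or cubic formula and checking that the discriminant stays positive for every $L<0$. I would try first to show that $g$ is strictly increasing on $[0,\infty)$ with $g(0)=0$, which gives a unique root. The sign of $x$ follows from $p<\tfrac12$: then $x<0$, so $\Phi^{-1}(p)\approx -\sqrt{u^+(L)}$.

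\textbf{Step 3: substitute into the seeds.} With $p=c_{\mathrm{seed}}<\tfrac12$ I get $z=-\sqrt{u_1}$ and $z^2=u_1$. Equation \eqref{eq:vq} then becomes $v_q=z+\sqrt{u_1+2\kappa}$, and $\alpha_q$ is computed exactly as before. For the second inversion I need $c_{\mathrm{seed}}/\alpha_q<\tfrac12$, so that the same negative branch applies. I would check this using $\alpha_q = v_q^2/(\kappa+v_q^2/2)\ge$ the relevant bound in the deep-OTM regime where the seed is used. The simplest case is $\alpha_q\ge 1$, which is equivalent to $v_q^2\ge 2\kappa$. Otherwise I would state the condition as a hypothesis on the domain.

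\textbf{Step 4: conclude.} Applying Step 1 with $p=c_{\mathrm{seed}}/\alpha_q$ gives $z_q=-\sqrt{u_2}$. Substituting into \eqref{eq:vq1} yields \eqref{eq:vq1_SE_final}. The resulting expression involves only logarithms, square roots and the explicit root formula $u^+$, which makes it fully analytical.
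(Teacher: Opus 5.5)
This is essentially the paper's argument: the Soranzo--Epure form is inverted by using $1-(2p-1)^2=4p(1-p)$ to turn $E(u)=L$ into the quadratic $A(L)u^2+B(L)u+C(L)=0$ in $u=x^2$, its positive root is taken with a minus sign because $p<\tfrac12$, and the resulting $\Phi^{-1}$ is substituted twice into \eqref{eq:vq}--\eqref{eq:vq1}. Your two extra checks are well placed and go beyond the paper. First, its displayed formula for $u^+$ returns the negative root whenever $A(L)<0$, which is the practical case, so your monotonicity argument is the right way to fix the branch. Second, your `simplest case' $\alpha_q\ge1$ never occurs, because $v_q^2-2\kappa=2zv_q<0$; so $c_{\mathrm{seed}}<\alpha_q/2$ has to be imposed as a domain hypothesis, which holds in the tail-filter and $|k|>\kappa_4$ regimes where $v_{q,1}$ is used but fails for small $\kappa$.
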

where $u^{+}(L)$ is defined in proposition \ref{prop:vq1_SE} at the appendix \ref{sec:app}.

 \subsection{Full Initial IV point definition}
 \label{sec:fullseed}
\begin{definition}
\label{def:regime_seed}
Let $c_{\mathrm{tv}} = c - \max(1-e^k,0)$ and $v_{\mathrm{ATM}} = v_0^{(4)}$.
For $|k|\ge\kappa$ define the OTM-equivalent price
\[
  c_{\mathrm{seed}} =
  \begin{cases}
    e^{-k}\!\left(c - 1 + e^k\right) & k < 0,\\
    c                                  & k \ge 0,
  \end{cases}
\]
Define
\begin{equation}
  v_0 =
  \begin{cases}
    v_{\mathrm{ATM}}(c)
      & |k| < \kappa_{1},\\[6pt]
    \max\!\bigl(v_{P7}(|k|,\,c_{\mathrm{seed}}),\;v_{\mathrm{ATM}}\bigr)
      & \kappa_{1} \le |k| \le \kappa_{2},\\[6pt]
    \max\!\Bigl(\tfrac{1}{2}\bigl(v_{P3}+v_{P7}\bigr)(|k|,\,c_{\mathrm{seed}}),\;v_{\mathrm{ATM}}\Bigr)
      & \kappa_{2} < |k| \le \kappa_{3},\\[6pt]
    \max\!\bigl(v_{P3}(|k|,\,c_{\mathrm{seed}}),\;v_{\mathrm{ATM}}\bigr)
      & \kappa_{3}< |k| \le \kappa_{4},\\[6pt]
    \max\!\bigl(v_{q,1}(|k|,\,c_{\mathrm{seed}}),\;v_{\mathrm{ATM}}\bigr)
      & |k| > \kappa_{4}.\\
      v_{q,1}(|k|,\,c_{\mathrm{seed}})&  c \leq c^{*} ~~\&~ |k| \geq \kappa^{*} 
  \end{cases}
  \label{eq:regime_seed}
\end{equation}
where $\kappa^{*}=0.5,c^{*}=0.02128,\kappa_1=0.001,\kappa_2=0.81,\kappa_3=1.155, \kappa_4=1.347$.
\end{definition}

\begin{remark}
    All the regime seed boundaries $\kappa_1, \kappa_2, \kappa_3, \kappa_4$ and $c^{*},\kappa^{*}$ will be computed based on theoretical implied errors and  tolerance chosen for  Householder $(d=3)$ solver, which we will derive later sections. 
\end{remark}
\subsection{Solver Methods}
\paragraph{Halley's method}

We finally wrap-up IV computation methodology by presenting the inversion algorithm. As noted in \eqref{eq:bs}, c is normalised call price With $d_1=-k/v+v/2$, $d_2=d_1-v$, $\phi=\Phi'$. Then the Halley step is, 
\begin{equation}
  v \leftarrow v - \frac{2fV}{2V^2-fW},\qquad
  f=c_{\mathrm{BS}}-c,\quad V=\phi(d_1),\quad W=\phi(d_1)d_1d_2/v.
  \label{eq:halley}
\end{equation}
This method has cubic convergence therefore a decent initial value seed having error around $10^{-1}$ could provide convergence at the level of $10^{-14}$ after 3-4 iterations. 
\begin{proposition}[Cubic convergence]
Suppose $f\in C^3$ on a neighbourhood $I$ of $v^*$ with $f'(v^*)\neq 0$.
Then the Halley iterates satisfy
\begin{equation}
e_{n+1} = -\frac{1}{6}\!\left(\frac{f'''(v^*)}{f'(v^*)}
- \frac{3}{2}\!\left[\frac{f''(v^*)}{f'(v^*)}\right]^{\!2}\right)e_n^3
+ O(e_n^4).
\end{equation}
In particular $|e_{n+1}| \leq C|e_n|^3$ for some $C>0$, so the method is
cubically convergent.
\end{proposition}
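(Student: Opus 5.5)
The plan is to write Halley's update as a fixed-point map $v_{n+1}=H(v_n)$ and expand the error about the root. Set $e_n=v_n-v^*$, so that $f(v^*)=0$, and write
\[
a_j=\frac{f^{(j)}(v^*)}{j!\,f'(v^*)},\qquad j=2,3 .
\]
Taylor's theorem with $f\in C^3$ gives, uniformly on a small closed interval $J\subset I$ around $v^*$ on which $f'\neq0$,
\begin{align*}
f(v_n)&=f'(v^*)\bigl(e_n+a_2e_n^2+a_3e_n^3\bigr)+o(e_n^3),\\
f'(v_n)&=f'(v^*)\bigl(1+2a_2e_n+3a_3e_n^2\bigr)+o(e_n^2),\\
f''(v_n)&=f'(v^*)\bigl(2a_2+6a_3e_n\bigr)+o(e_n).
\end{align*}
The Halley step \eqref{eq:halley} is the general update $v_{n+1}=v_n-2ff'/(2f'^2-ff'')$, with $V=f'$ and $W=f''$. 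For the Black--Scholes function, $V=\partial_v c_{\mathrm{BS}}=\phi(d_1)$ and $W=\partial_v^2c_{\mathrm{BS}}=\phi(d_1)d_1d_2/v$. So it suffices to analyse the abstract iteration
\[
v_{n+1}=v_n-\frac{2ff'}{2f'^2-ff''}.
\]

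First, I would show the iteration is well defined near $v^*$. Near the root, $2f'^2-ff''=2f'(v^*)^2+O(e_n)$, so the denominator stays bounded away from zero on a possibly smaller $J$.

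Next, I would divide numerator and denominator by $f'(v^*)^2$ and expand the ratio as a power series in $e_n$ up to order three. The numerator is
\[
2\,(e+a_2e^2+a_3e^3)(1+2a_2e+3a_3e^2)=2\bigl(e+3a_2e^2+(4a_3+2a_2^2)e^3\bigr)+o(e^3).
\]
The denominator is
\[
2(1+2a_2e+3a_3e^2)^2-(e+a_2e^2)(2a_2+6a_3e)=2+6a_2e+(12a_3+6a_2^2)e^2+o(e^2),
\]
where I have kept $(2a_2)^2e^2=4a_2^2e^2$ from the square and subtracted $2a_2^2e^2+6a_3e^2$ from the product.

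Dividing gives
\[
\frac{2ff'}{2f'^2-ff''}=e+c_2e^2+c_3e^3+o(e^3).
\]
The order-two term vanishes, $c_2=0$, because $3a_2-3a_2=0$; this is exactly where Halley's design pays off. The cubic coefficient $c_3$ must come out so that
\[
e_{n+1}=e_n-(e_n+c_3e_n^3)+o(e_n^3)=(a_2^2-a_3)e_n^3+o(e_n^3).
\]
Rewriting in derivatives, $a_2^2-a_3=\tfrac14(f''/f')^2-\tfrac16 f'''/f'$ evaluated at $v^*$, which is the stated constant $-\tfrac16\bigl(f'''/f'-\tfrac32(f''/f')^2\bigr)$.

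The main obstacle is this bookkeeping of the third-order coefficients, namely the cross terms in the quotient. I would do it carefully by solving
\[
(e+c_3e^3)\bigl(2+6a_2e+(12a_3+6a_2^2)e^2\bigr)=2e+6a_2e^2+(8a_3+4a_2^2)e^3
\]
for $c_3$. This gives $2c_3+12a_3+6a_2^2=8a_3+4a_2^2$, so $c_3=-2a_3-a_2^2$, and the constant is confirmed.

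On regularity: with only $f\in C^3$ the remainder is $o(e^3)$ rather than $O(e^4)$. The $O(e_n^4)$ remainder as stated needs $f\in C^4$. For the Black--Scholes $f$, which is real-analytic in $v>0$, $C^4$ holds, so I would remark on this. Either way, the remainder gives $|e_{n+1}|\le C|e_n|^3$ on $J$ with $C$ taken as a bound of the continuous error ratio.

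Finally, choosing $|e_0|$ small enough that $C|e_0|^2<1$ keeps all iterates in $J$ by induction. This yields convergence together with the cubic rate.
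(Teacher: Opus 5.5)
Your route of Taylor-expanding the Halley map about $v^*$ is the standard one; the paper gives no argument of its own and only cites Traub and Stoer--Bulirsch. The surrounding pieces of your argument are sound: the identification $V=f'$, $W=f''$ for the Black--Scholes step, the non-vanishing of $2f'^2-ff''$ near the root, the closing induction, and your correct remark that $f\in C^3$ only gives an $o(e_n^3)$ remainder while the stated $O(e_n^4)$ needs $C^4$ (which the analytic Black--Scholes price has).

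The gap is in the one computation that carries the content of the proposition: the $e^2$ coefficient of the denominator is wrong. The term $2(1+2a_2e+3a_3e^2)^2$ contributes $(8a_2^2+12a_3)e^2$. All of $(2a_2^2+6a_3)e^2$ from $(e+a_2e^2)(2a_2+6a_3e)$ must then be subtracted, so
\[
\frac{2f'^2-ff''}{f'(v^*)^2}=2+6a_2e+(6a_2^2+6a_3)e^2+o(e^2),
\]
not $2+6a_2e+(6a_2^2+12a_3)e^2$; you dropped the $-6a_3e^2$.

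With your value $c_3=-2a_3-a_2^2$ one gets $e_{n+1}=-c_3e_n^3=(a_2^2+2a_3)e_n^3$. That is \emph{not} the target $(a_2^2-a_3)e_n^3$, so your closing claim that the constant is confirmed is false. Your earlier sentence stating what $c_3$ \emph{must come out} to be presupposes the answer rather than deriving it. A quick check exposes the error: for $f(v)=(v-v^*)+(v-v^*)^3$ (so $a_2=0$, $a_3=1$), Halley gives $e_{n+1}=-e_n^3+O(e_n^5)$, whereas your $c_3$ predicts $+2e_n^3$.

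The repair is purely arithmetic. With the corrected denominator, the order-$e^3$ balance reads $2c_3+6a_2^2+6a_3=8a_3+4a_2^2$. Hence $c_3=a_3-a_2^2$ and
\[
e_{n+1}=(a_2^2-a_3)e_n^3+o(e_n^3)=\Bigl(\tfrac14\bigl(f''/f'\bigr)^2-\tfrac16\,f'''/f'\Bigr)\Big|_{v^*}e_n^3+o(e_n^3),
\]
which is exactly $-\tfrac16\bigl(f'''/f'-\tfrac32(f''/f')^2\bigr)e_n^3$ as stated. Derive $c_3$ this way first, rather than asserting it and then checking it. The rest of your argument then goes through, including the bound $|e_{n+1}|\le C|e_n|^3$ from boundedness of the continuous ratio $e_{n+1}/e_n^3$ on a compact $J$.
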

\paragraph{\texorpdfstring{Householder's Method (HH-4) $d=3$}{Householder's Method  d=3}}

Halley's method exploits $f''$ to achieve cubic convergence; including $f'''$
yields a fourth-order method \citet{Traub82}.
With the same notation as \eqref{eq:halley}, define the scaled derivatives

\begin{equation}
  \alpha = \frac{f''(v)}{f'(v)} = \frac{d_1 d_2}{v},
  \qquad
  \beta  = \frac{f'''(v)}{f'(v)} = \frac{(d_1 d_2)^2 - (d_1^2+d_2^2) - d_1 d_2}{v^2},
  \label{eq:hh4-derivs}
\end{equation}

where both identities follow from the exact relations
$\partial d_1/\partial v = -d_2/v$ and $\partial d_2/\partial v = -d_1/v$,
together with $f' = \phi(d_1)$.
Setting $r = f/f' = f/V$, the Householder step is

\begin{equation}
  v \;\leftarrow\; v + \frac{3r\!\left(2 - r\alpha\right)}
                             {-6 + 6r\alpha - r^2\beta},
  \label{eq:hh4}
\end{equation}

with a Newton fallback $v \leftarrow v - r$ when the cubic denominator
is smaller than $10^{-20}$ in absolute value.
The step \eqref{eq:hh4} reduces to the Halley step \eqref{eq:halley}
when $\beta=0$, and to the Newton step when both $\alpha=\beta=0$.

\begin{proposition}[Quartic convergence of Householder $d=3$]{\label{prop:quartichh4}}
Let $f \in C^4$ on a neighbourhood $I$ of a simple root $v^*$,
i.e.\ $f(v^*)=0$ and $f'(v^*)\neq 0$.
Then the iterates produced by \eqref{eq:hh4} satisfy
\begin{equation}
  e_{n+1} = -\frac{1}{24}
  \left(
    \frac{f^{(4)}(v^*)}{f'(v^*)}
    - 4\,\frac{f'''(v^*)}{f'(v^*)}\cdot\frac{f''(v^*)}{f'(v^*)}
    + 3\!\left[\frac{f''(v^*)}{f'(v^*)}\right]^{\!3}
  \right)e_n^4
  + O(e_n^5).
  \label{eq:hh4-conv}
\end{equation}
In particular $|e_{n+1}|\leq C|e_n|^4$ for some $C>0$,
so the method is quartically convergent.
\end{proposition}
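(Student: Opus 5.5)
The plan is to treat \eqref{eq:hh4} as a fixed-point map $v\mapsto \Psi(v)=v+N(v)/D(v)$ with $N=3r(2-r\alpha)$ and $D=-6+6r\alpha-r^2\beta$, where $r=f/f'$, $\alpha=f''/f'$ and $\beta=f'''/f'$. The proof is a direct Taylor expansion of $\Psi(v^*+e)-v^*$ in powers of $e$, using only $f\in C^4$ and $f'(v^*)\neq0$; the Black--Scholes identities in \eqref{eq:hh4-derivs} are not needed. Write $A=f''(v^*)/f'(v^*)$, $B=f'''(v^*)/f'(v^*)$, $D_4=f^{(4)}(v^*)/f'(v^*)$, and equivalently $c_j=f^{(j)}(v^*)/(j!\,f'(v^*))$, so that $c_2=A/2$, $c_3=B/6$, $c_4=D_4/24$. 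Since $f'(v^*)\neq0$, continuity gives $f'\neq0$ on a smaller neighbourhood. There $r,\alpha,\beta$ are well defined, and $D(v)\to-6$ as $v\to v^*$, so $\Psi$ is well defined near the root and the Newton fallback is never triggered for $e$ small.

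\textbf{Step 1: expand the scaled quantities.} From $f(v^*+e)=f'(v^*)\bigl(e+c_2e^2+c_3e^3+c_4e^4+o(e^4)\bigr)$, I will expand $f'$, $f''$ and $f'''$ at $v^*+e$ to the order needed. Dividing gives
\[
r=e-c_2e^2+(2c_2^2-2c_3)e^3+(\cdots)e^4+o(e^4),
\]
together with $\alpha=A+(B-A^2)e+O(e^2)$ and $\beta=B+O(e)$. Only low orders of $\alpha$ and $\beta$ are needed, because they always appear multiplied by $r=O(e)$ or $r^2=O(e^2)$.

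\textbf{Step 2: expand the step and cancel.} Substitute into $N/D$ and expand the quotient to order $e^4$. A cleaner organisation is to recognise \eqref{eq:hh4} as the classical Householder step $v+3\,(1/f)''/(1/f)'''$. Clearing the common factors $f'/f^4$ reproduces exactly $N/D$. Equivalently, the step is the root of the local Padé-type model whose reciprocal agrees with $1/f$ through third order. Either way, the coefficients of $e$, $e^2$ and $e^3$ in $e+N/D$ must vanish. I will verify this explicitly: the $e$ term cancels because $r\sim e$, the $e^2$ term is the Newton cancellation, and the $e^3$ term is the Halley cancellation, which is consistent with the reduction to \eqref{eq:halley} when $\beta=0$. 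The surviving $e^4$ coefficient comes out as $\pm(c_2^3-2c_2c_3+c_4)$. Rewritten with $c_2=A/2$, $c_3=B/6$, $c_4=D_4/24$, this is $\pm\tfrac1{24}\bigl(D_4-4AB+3A^3\bigr)$, matching \eqref{eq:hh4-conv} up to the overall sign. That sign has to be fixed by carrying the expansion through carefully. As a cross-check I will use the analogous Halley constant stated in the previous proposition, which equals $c_2^2-c_3$.

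\textbf{Step 3: conclude, and the main obstacle.} Having $e_{n+1}=K e_n^4+R(e_n)$ with $R(e)=o(e^4)$ (indeed $O(e^5)$ if $f\in C^5$, or via an integral-form remainder), continuity of $f,\dots,f^{(4)}$ gives a neighbourhood $|e|\le\rho$ on which $|R(e)|\le |e|^4$. Hence $|e_{n+1}|\le C|e_n|^4$ with $C=|K|+1$. Shrinking $\rho$ so that $C\rho^3<1$ makes the interval invariant, so the iteration converges quartically.

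The main obstacle is the bookkeeping in Step 2. The quotient expansion mixes four orders of $r$ with the first-order terms of $\alpha$ and $\beta$, and an arithmetic slip easily breaks the $e^3$ cancellation or flips the sign of the constant. I will first carry out the computation symbolically in the $c_j$ variables, checking the $e^3$ cancellation as an internal consistency test. Only then will I convert to $A$, $B$, $D_4$.
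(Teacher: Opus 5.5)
Your route (expand one step of \eqref{eq:hh4} about $v^*$, or use Householder's form $v+3(1/f)''/(1/f)'''$) is sound, and it is more than the paper provides, since the paper only cites Traub and Stoer--Bulirsch. However, the two items you postpone in Step~2 are exactly where the argument fails as planned.

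\textbf{The sign.} Use the usual convention $e_n=v_n-v^*$. A correct expansion then gives
\[
e_{n+1}=\bigl(c_4-2c_2c_3+c_2^3\bigr)e_n^4+o(e_n^4)=+\tfrac{1}{24}\bigl(D_4-4AB+3A^3\bigr)e_n^4+o(e_n^4),
\]
which has the \emph{opposite} sign to \eqref{eq:hh4-conv}. A clean check is $f(v)=v+v^2$ with $v^*=0$, so $A=2$ and $B=D_4=0$. One step of \eqref{eq:hh4} maps $v$ to exactly $v^4/\bigl((1+v)^4-v^4\bigr)\approx+v^4$, whereas the display predicts $-v^4$.

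The display is right only if $e_n:=v^*-v_n$, and the paper never states this convention. So you cannot simply ``match'' the display. You must either adopt that convention explicitly or record the corrected sign; the bound $|e_{n+1}|\le C|e_n|^4$ is unaffected either way.

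Your proposed Halley cross-check cannot detect this. Because $e^3$ is odd, the cubic constant $c_2^2-c_3$ is the same under both conventions, while the quartic constant changes sign.

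The Laurent form of your second idea settles the sign in two lines. Write $f'(v^*)/f(v^*+e)=e^{-1}+b_1+b_2e+b_3e^2+\cdots$ with $b_3=-(c_4-2c_2c_3+c_2^3)$. Then $3(1/f)''/(1/f)'''=-e-b_3e^4+\cdots$, so $e_{n+1}=-b_3e_n^4+\cdots$.

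\textbf{The truncation orders.} Your claim that $\alpha=A+(B-A^2)e+O(e^2)$ and $\beta=B+O(e)$ suffice is one order short. Since $e+N/D=(eD+N)/D$, write
\[
eD+N=6(r-e)+3r\alpha(2e-r)-e\,r^2\beta,\qquad r(2e-r)=e^2-c_2^2e^4+\cdots,\qquad e\,r^2=e^3-2c_2e^4+\cdots.
\]
Here the $e^2$-coefficient of $\alpha$, namely $12c_4-18c_2c_3+8c_2^3$, and the $e$-coefficient of $\beta$, namely $24c_4-12c_2c_3$, both enter the $e^4$ term at full weight.

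With your truncation the $e^2$ and $e^3$ cancellations still hold, so your planned consistency test passes. The quartic constant, however, comes out as $3c_4-9c_2c_3+5c_2^3$. If you keep these terms together with the $e^4$-coefficient $-3c_4+7c_2c_3-4c_2^3$ of $r$, you get $eD+N=-6(c_4-2c_2c_3+c_2^3)e^4+o(e^4)$. Dividing by $D=-6+O(e)$ then yields the constant above.

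Finally, under $f\in C^4$ the remainder is only $o(e_n^4)$. The integral form of Taylor's remainder does not upgrade this to $O(e_n^5)$ unless $f\in C^5$ or $f^{(4)}$ is Lipschitz. That condition does hold for $c_{\mathrm{BS}}$, which is analytic in $v>0$. Your Step~3 only needs $o(e^4)$ and is fine as written.
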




The details can be found in standard references
e.g.~\citet{Traub82}, \citet{Stoer02}.
In the following section~\ref{sec:IValgo} we present the complete
pseudo-algorithm used in computations. For the rest of the study we will shortly call this method as HH-4 method.

\subsection{IV Solver Pseudo Algorithm}\label{sec:IValgo}
\begin{algorithm}[ht]
\footnotesize\DontPrintSemicolon
\caption{Regime-Split Implied Volatility Solver}
\label{alg:regime_solver}
\KwIn{normalised call price $c$, log-moneyness $k$, tolerance $\epsilon=10^{-14}$}
\KwOut{$v^*$ such that $|c_{\mathrm{BS}}(k,v^*)-c|<\epsilon$}

$s\leftarrow\sqrt{2\pi}[c{-}\max(1{-}e^k,0)]$;\;
$v_{\mathrm{ATM}}\leftarrow s(1+s^2/24+7s^4/1920+127s^6/322560)$\tcp*[r]{eq.~\eqref{eq:inverse}}
$\kappa\leftarrow|k|$;\;
$c_{\mathrm{seed}}\leftarrow e^{-k}(c{-}1{+}e^k)$ if $k<0$, else $c$\tcp*[r]{eq.~\eqref{eq:cseed}}
\BlankLine
\uIf{$\kappa < \kappa_1$}{
  $v_0\leftarrow v_{\mathrm{ATM}}$\tcp*[r]{ATM, $\kappa_1=0.001$}
}
\uElseIf{$c_{\mathrm{seed}}<c^*$ \textbf{and} $\kappa>\kappa^*$}{
  $v_0\leftarrow v_{q,1}(\kappa,\,c_{\mathrm{seed}})$\tcp*[r]{tail guard, eq.~\eqref{eq:vq1}, $c^*{=}0.02128$, $\kappa^*{=}0.5$}
}
\uElseIf{$\kappa\le\kappa_4$}{
  $\varepsilon\leftarrow k{+}k^2/2{+}k^3/6{+}k^4/24$\;
  $v_{P1}\leftarrow(2c_s{+}\varepsilon{+}\sqrt{\max(N,0)})/(2a_1(2{+}\varepsilon))$\tcp*[r]{eq.~\eqref{eq:vp1}}
  $E\leftarrow e^\kappa$\;
  $v_{P7}\leftarrow v_{P1}\bigl(1+G_7(v_{P1})/D_7(v_{P1})\bigr)$\tcp*[r]{eq.~\eqref{eq:vP7}}
  \uIf{$\kappa\le\kappa_1$}{
    $v_0\leftarrow\max(v_{P7},v_{\mathrm{ATM}})$\tcp*[r]{$\kappa_2{=}0.81$, $\varepsilon_\Phi{=}10^{-3}$}
  }\Else{
    $v_{P3}\leftarrow v_{P1}\bigl(1+G_3(v_{P1})/D_3(v_{P1})\bigr)$\tcp*[r]{eq.~\eqref{eq:vP3}}
    \uIf{$\kappa\le\kappa_3$}{
      $v_0\leftarrow\max(\tfrac12(v_{P3}{+}v_{P7}),v_{\mathrm{ATM}})$\tcp*[r]{avg, $\kappa_3{=}1.155$, $\varepsilon_\Phi{=}5{\times}10^{-3}$}
    }\Else{
      $v_0\leftarrow\max(v_{P3},v_{\mathrm{ATM}})$\tcp*[r]{$\kappa_4{=}1.347$, $\varepsilon_\Phi{=}10^{-2}$}
    }
  }
}
\Else{
  $v_0\leftarrow v_{q,1}(\kappa,\,c_{\mathrm{seed}})$\tcp*[r]{deep-OTM, eq.~\eqref{eq:vq1}}
}
\BlankLine
$v\leftarrow v_0$;\quad $e^k$ precomputed once\;
\Repeat(\tcp*[f]{HH-4 polish, eq.~\eqref{eq:hh4}, all regimes}){$|f|<\epsilon$}{
  $d_1\leftarrow{-}k/v{+}v/2$;\;
  $d_2\leftarrow d_1{-}v$;\;
  $f\leftarrow\Phi(d_1){-}e^k\Phi(d_2){-}c$;\;
  $r\leftarrow f/\Phi'(d_1)$;\;
  $\alpha\leftarrow d_1d_2/v$;\;
  $\beta\leftarrow[(d_1d_2)^2{-}(d_1^2{+}d_2^2){-}d_1d_2]/v^2$\;
  $v\leftarrow v+3r(2{-}r\alpha)/({-}6{+}6r\alpha{-}r^2\beta)$\tcp*[r]{eq.~\eqref{eq:hh4}}
}
\Return{$v$}
\end{algorithm}
\begin{remark}
    The joint price and moneyness filter in line 7 corresponds to $c^{*}=0.02128$ and $\kappa^{*}=0.5$ derived in appendix \ref{sec:priceandotmtail}. Then, essential regime barriers, $\kappa_{1} \dots \kappa_{4}$ are defined under section \ref{sec:regimeB}. 
\end{remark}
\FloatBarrier
\subsection{Theoretical justification of Regime Boundaries }
\label{sec:asymptotics}

We establish that each component of regime boundaries used in \ref{def:regime_seed} has a theoretical underpinning through truncation of Gaussian CDF and chosen error tolerance.

\subsubsection{ Taylor expansion of Gaussian distribution CDF and truncation tolerance to regime boundaries}
\label{sec:boundaries}

The Taylor--CDF approximation $P_j$ is trusted while the argument
$x$ satisfies $|a_{2m+1}x^{2m+1}|\le\varepsilon_\Phi$, i.e.\
$|x|\le a_\eta$ where
\begin{equation}
  a_\eta(\varepsilon_\Phi)
  = \left(\frac{\varepsilon_\Phi}{|a_{2m+1}|}\right)^{1/(2m+1)}.
  \label{eq:a_eta}
\end{equation}
For $P_7$ the first neglected term is $a_9 x^9$, so
\[
  a_{\eta,P7}(\varepsilon_\Phi)
  = \left(\frac{\varepsilon_\Phi}{|a_9|}\right)^{1/9}.
\]
The OTM condition $|d_2|=\kappa/v+v/2\le a_\eta$ gives the
log-moneyness validity boundary
\begin{equation}
  k_{\max}(v;\,a_\eta) = a_\eta v - \tfrac{v^2}{2}.
  \label{eq:kmax}
\end{equation}

\begin{proposition}[Concavity of the Taylor--CDF boundary]
\label{prop:concavity}
For fixed $a_\eta>0$, the function $k_{\max}(v;a_\eta)=a_\eta v-v^2/2$
is strictly concave in $v$, attains its maximum at $v^*=a_\eta$, and
\[
  \max_{v>0}\,k_{\max}(v;\,a_\eta) = \frac{a_\eta^2}{2}.
\]
\end{proposition}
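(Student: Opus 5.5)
The claim is an elementary statement about a quadratic in $v$, so I will argue directly by calculus (equivalently, by completing the square), treating $a_\eta>0$ as a fixed constant, as in \eqref{eq:a_eta}. I view $k_{\max}(\cdot\,;a_\eta)$ as the polynomial $g(v)=a_\eta v-\tfrac{1}{2}v^2$ on the open half-line $v>0$.

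\emph{Concavity.} I will compute $g'(v)=a_\eta-v$ and $g''(v)=-1<0$. Since $g''$ is strictly negative on all of $\mathbb{R}$, $g$ is strictly concave there, hence also on $(0,\infty)$. Alternatively, for $v_1\neq v_2$ and $\lambda\in(0,1)$, the concavity gap
\[
\lambda g(v_1)+(1-\lambda)g(v_2)-g\bigl(\lambda v_1+(1-\lambda)v_2\bigr)=-\tfrac{1}{2}\lambda(1-\lambda)(v_1-v_2)^2<0
\]
gives the same conclusion without derivatives.

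\emph{Maximiser and maximum value.} Completing the square gives
\[
g(v)=\tfrac{1}{2}a_\eta^2-\tfrac{1}{2}(v-a_\eta)^2 .
\]
Therefore $g(v)\le a_\eta^2/2$ for every $v$, with equality if and only if $v=a_\eta$. Because $a_\eta>0$, this unique maximiser $v^*=a_\eta$ lies in the admissible domain $v>0$. It follows that $\max_{v>0}k_{\max}(v;a_\eta)=a_\eta^2/2$, and that the maximum is attained rather than merely a supremum. The same point is also the unique critical point, since $g'(v)=0$ exactly when $v=a_\eta$, which is consistent with strict concavity.

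\emph{Main obstacle.} The only step requiring care is confirming that the unconstrained maximiser falls inside the domain $v>0$. This holds because $a_\eta=(\varepsilon_\Phi/|a_{2m+1}|)^{1/(2m+1)}>0$ whenever $\varepsilon_\Phi>0$ and $a_{2m+1}\neq 0$. I will also remark on the interpretation: for any fixed tolerance $\varepsilon_\Phi$, the Taylor--CDF approximation $P_j$ can only be trusted for log-moneyness $|k|\le a_\eta^2/2$. This bound is what will later be evaluated at the chosen tolerances to produce the regime boundaries $\kappa_2,\kappa_3,\kappa_4$.
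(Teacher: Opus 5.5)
Your proof is correct and follows the paper's argument: the paper likewise computes $k_{\max}'(v)=a_\eta-v$ and $k_{\max}''(v)=-1<0$, then reads off the unique maximiser $v^*=a_\eta$ with value $a_\eta^2/2$. Your completing-the-square identity and your check that $a_\eta>0$ puts the maximiser inside $v>0$ spell out steps the paper leaves implicit.
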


\begin{proof}
Differentiating twice: $k_{\max}'(v)=a_\eta-v$, $k_{\max}''(v)=-1<0$.
Hence the function is strictly concave, the unique maximum is at
$v^*=a_\eta$, and $k_{\max}(a_\eta;\,a_\eta)=a_\eta^2/2$.
\end{proof}%
The peak log-moneyness implied by CDF tolerance $\varepsilon_\Phi$ for
$P_7$ is therefore
\begin{equation}
  k_{\max}^{\mathrm{peak}}(\varepsilon_\Phi)
  = \frac{a_{\eta,P7}(\varepsilon_\Phi)^2}{2}
  = \frac{1}{2}\!\left(\frac{\varepsilon_\Phi}{|a_9|}\right)^{2/9}.
  \label{eq:kpeak}
\end{equation}

\subsubsection{HH4 admissibility and Transition initial seed selection }
\label{sec:transition}
\paragraph{HH-4 admissibility condition}

The admissibility condition for two net HH-4 corrections is
derived as follows.  Let $\eta_v=|v_0/v-1|$ be the relative seed
error.  If the HH-4 update is applied for two correction steps to
tolerance $\varepsilon_v$, the idealised admissible seed level is
\begin{equation}
  \eta_v^* = \varepsilon_v^{1/16}.
  \label{eq:hh4_admissible}
\end{equation}
For $\varepsilon_v=10^{-14}$, equation~\eqref{eq:hh4_admissible} gives
$\eta_v^*\approx13.3\%$. 

This follows directly from two applications of the quartic contraction bound in 
Proposition~\ref{prop:quartichh4}: $|e_1| \leq C|e_0|^4$ and 
$|e_2| \leq C|e_1|^4 \leq C^5|e_0|^{16}$. Requiring $|e_2| \lesssim \varepsilon_v$ 
and dropping the order-unity constant $C^5$ yields $\eta_v^* = \varepsilon_v^{1/16}$.
This rule marginally surpasses the two-iteration performance standard established
by~\citet{Jackel2015} as the hallmark of a high-quality IV
initialisation.
\paragraph{Initial Seed Selection}{\label{sec:regimeB}}
Substituting $|a_9|=1/(3456\sqrt{2\pi})$ in equation \eqref{eq:kpeak} yields the three regime boundaries, namely, $k_{\max}^{\mathrm{peak}}$:
\begin{equation}
  \varepsilon_\Phi = 10^{-3}
    \;\Longrightarrow\;
    \kappa_{2} \approx 0.81,
  \qquad
  \varepsilon_\Phi = 5 \times 10^{-3}
    \;\Longrightarrow\;
    \kappa_{3} \approx 1.155, \qquad 
  \varepsilon_\Phi = 10^{-2}
    \;\Longrightarrow\;
    \kappa_{4} \approx 1.347.
  \label{eq:boundaries}
\end{equation}
The initial moneyness $\kappa_1=10^{-3}$ is for near ATM boundary, which is arbitrary and might be fine tuned for different grids.

The interval $0.81<|k|\le1.347$ is therefore the transition region
between the stricter ($10^{-3}$) and looser ($10^{-2}$) $P_7$
validity domains, and the cutoffs are analytical consequences of
\eqref{eq:kpeak}, not empirically tuned constants.

Inside the transition interval we define the weighted average seed
\begin{equation}
  v_{\mathrm{avg}} = \omega v_{P3}+ \left( 1-\omega \right) v_{P7}.
  \label{eq:vavg}
\end{equation}
We then immediately select $\omega=\frac{1}{2}$.
The averaged transition seed \eqref{eq:vavg} is then directly in
closed form:
\begin{equation}
  v_{\mathrm{avg}}
  = \frac{v_{P3}+v_{P7}}{2}
  = v_{P1}\!\left(
      1 + \frac{1}{2}\!\left(\frac{\mathcal{G}_3(w)}{\Dden_3(w)}
                            +\frac{\mathcal{G}_7(w)}{\Dden_7(w)}\right)
    \right)\Bigg|_{w=v_{P1}},
  \label{eq:vavg_closed}
\end{equation}
where $\mathcal{G}_3$, $\Dden_3$ are as in Proposition~2.5 and $\mathcal{G}_7$, $\Dden_7$ as in
Proposition~\ref{prop:vp7}.  Equation~\eqref{eq:vavg_closed} expresses
$v_{\mathrm{avg}}$ as a single rational function of $(k, c, \varepsilon)$
evaluated at $w=v_{P1}$, with no iterative component.
The weight $\omega=\tfrac12$ is justified by the following two-regime
analysis. 

The $\eta_{0}^{\max}=\%8.67$ is implied relative error from $v_{avg}$. Since $\eta_0^{\max}=8.67\%<13.3\%=\eta_v^*$,
the averaged transition seed \eqref{eq:vavg} satisfies the two-HH4
admissibility condition throughout the entire transition interval.
\begin{remark}[Sign structure of the transition interval]
\label{rem:sign}
Numerical evaluation of $v_{P3}/v$ and $v_{P7}/v$ on the transition
grid (Table~\ref{tab:ratio}) shows two sub-regimes:
\begin{itemize}
\item \emph{Small price and mild-to-deep OTM tail filter}
As derived the datils in appendix \ref{sec:priceandotmtail}, we apply a sub-filter which corrects deficiencies of current multi-regime moneyness filters with also a small price rule based on HH4 minimum two iteration admissibility condition. This corresponds to a joint condition of $c_{seed}<c^{*}$ and $k>k^{*}$.
  \item \emph{Bracketing sub-regime} ($\kappa_{2}<|k|\lesssim \kappa_{3}$):
        $v_{P3}/v>1$ and $v_{P7}/v<1$, so $v_{P3}$ overshoots and
        $v_{P7}$ undershoots the true root.  The midpoint
        $v_{\mathrm{avg}}$ is then a true bracket of the root, and
        $\omega=\tfrac12$ is the unbiased, parameter-free choice within
        the bracket. The upper boundary in this regime choice corresponds to \%0.5 CDF error tolerance for Taylor expansion of Gaussian CDF and lower boundary corresponds to \%0.1.
  \item \emph{Error-reduction sub-regime} ($\kappa_{3}\lesssim|k|\le \kappa_{4}$):
        both seeds undershoot ($v_{P3}/v<1$, $v_{P7}/v<1$), but $v_{P3}$
        decays more slowly.  Therefore, in this regime there's a revert-back to $vp_3$ initial seed in this region. The upper boundary choice corresponds to \%1 CDF error tolerance for Taylor expansion of Gaussian CDF.
\end{itemize}
In both sub-regimes the maximum relative seed error of $v_{\mathrm{avg}}$
is $\eta_0^{\max}=8.67\%$ (Table~\ref{tab:ratio}).
\end{remark}

\subsubsection{Summary of the analytical boundary construction and tolerance decisions}
\label{sec:chain_summary}

The complete derivation may be summarised as five linked steps, each
flowing from the previous with no free parameters:

\begin{enumerate}
  \item \textbf{CDF truncation.}
    Truncate $\Phi$ at orders 1, 3, 7 to obtain $P_1$, $P_3$, $P_7$
    with known first neglected terms $a_9 x^9$ (for $P_7$).

  \item \textbf{Seed derivation.}
    Substitute $P_1$ into the BS equation and solve the resulting
    quadratic analytically ($v_{P1}$, equation~\eqref{eq:vp1}).
    One Newton step on the $P_3$/$P_7$ surrogates yields $v_{P3}$,
    $v_{P7}$ in closed form (equation~\eqref{eq:vP7}).

  \item \textbf{Tolerance-to-boundary.}
    Impose $|a_9 x^9|\le\varepsilon_\Phi$ on the argument
    $x=d_2$ to obtain the admissible argument size $a_\eta$
    (equation~\eqref{eq:a_eta}) and the concave boundary
    $k_{\max}(v)=a_\eta v-v^2/2$ (Proposition~\ref{prop:concavity}).
    Two round-number tolerances $\varepsilon_\Phi\in\{10^{-3},10^{-2}\}$
    give peak boundaries $\kappa_{2}$ and $\kappa_{4}$ analytically
    (equation~\eqref{eq:boundaries}).

  \item \textbf{Admissibility verification.}
    The averaged seed $v_{\mathrm{avg}}=\tfrac12(v_{P3}+v_{P7})$
    achieves maximum relative seed error $\eta_0^{\max}=8.67\%$
    (Table~\ref{tab:ratio}), which lies below the two-HH4
    admissibility threshold $\eta_v^*\approx13.3\%$
    (equation~\eqref{eq:hh4_admissible}).  Mean net HH4 updates
    are below $2$ on both benchmark grids (Table~\ref{tab:iters}).

  \item \textbf{Price-residual reconciliation.}
    The Vega--Lipschitz bound (Lemma~\ref{lem:lipschitz}) connects
    the seed-error view to the code-level stopping criterion:
    $\eta_v\le\eta_v^*$ implies
    $|f(v_0)|\lesssim\eta_v^*v^*\varphi(d_1)$, after which quartic
    HH4 convergence reaches $|f|<10^{-14}$ in at most two net steps. This method is used for the determination of near OTM price filter to locate some of the edge cases to more appropriate regime.
\end{enumerate}

\noindent
Steps~1--3 establish the regime boundaries from first principles;
steps~4--5 certify that the chosen seeds satisfy the convergence
requirements of the polisher.  Together they provide a unified
theoretical justification for the regime-split architecture used in the study.

\newpage
\section{Results}
\label{sec:bench}

\subsection{Experimental setup}

\paragraph{Grids.}
\begin{itemize}
    \item In first grid, we use 328 base options, with volatilities
    $v\in\{0.01,0.05,0.10,\ldots,2.00\}$ and call deltas
    $\Delta\in\{0.05,0.20,0.30,0.45,0.55,
    0.70,0.80,0.95\}$, tiled 5,000 times, producing
    1,640,000 IV inversions.

    \item In second, we use 1,970 base options, with volatilities
    $v\in\{0.01,0.05,0.06,\ldots,2.00\}$ and call deltas
    $\Delta\in\{0.01,0.05,0.20,0.30,0.45,0.55,
    0.70,0.80,0.95,0.99\}$, tiled 5,000 times, producing
    9,850,000 IV inversions.
\end{itemize}
\paragraph{Accuracy target.}
Maximum absolute error against the known grid volatility;
all timings in nanoseconds per implied volatility (ns/IV).
\subsection{ATM seed convergence before root solver update step}

Table~\ref{tab:atm_seed_values} reports unpolished seed values;
Table~\ref{tab:atm_seed_abs_errors} gives relative errors.
Taylor4 achieves relative error $1.19\times10^{-10}$ at $v=0.20$. 

\begin{table}[!htbp]
\centering
\caption{ATM seed values $v_0$ before HH-4 method.}
\label{tab:atm_seed_values}
\begin{tabular}{lcccc}
\toprule
$v$ & Bren--S $s$ & Taylor2 & Taylor3 & Taylor4 \\
\midrule
0.01 & 0.0099999583 & 0.0100000000 & 0.0100000000 & 0.0100000000 \\
0.05 & 0.0499947922 & 0.0499999989 & 0.0500000000 & 0.0500000000 \\
0.20 & 0.1996671661 & 0.1999988380 & 0.1999999950 & 0.2000000000 \\
0.40 & 0.3973492574 & 0.3999632598 & 0.3999993724 & 0.3999999882 \\
0.90 & 0.8705258291 & 0.8980132342 & 0.8998358919 & 0.8999850562 \\
1.20 & 1.1317270359 & 1.1921238223 & 1.1988925223 & 1.1998287577 \\
1.50 & 1.3704872161 & 1.4777412721 & 1.4953680022 & 1.4989433486 \\
2.00 & 1.7112487838 & 1.9200476883 & 1.9735487713 & 1.9904681838 \\
\bottomrule
\end{tabular}
\end{table}
\begin{table}[!htbp]
\centering
\caption{ATM seed relative errors $|v_0-v|/v$ before Halley/Householder $4^{th}$ order method.}
\label{tab:atm_seed_abs_errors}
\small
\begin{tabular}{lcccc}
\toprule
$v$ & B--S $s$ & Taylor2 & Taylor3 & Taylor4 \\
\midrule
0.01 & $4.167\times 10^{-8}$  & $3.647\times 10^{-13}$ & $1.665\times 10^{-16}$ & $1.613\times 10^{-16}$ \\
0.05 & $5.208\times 10^{-6}$  & $1.139\times 10^{-9}$  & $3.075\times 10^{-13}$ & $1.388\times 10^{-16}$ \\
0.20 & $3.328\times 10^{-4}$  & $1.162\times 10^{-6}$  & $5.005\times 10^{-9}$  & $2.384\times 10^{-11}$ \\
0.40 & $2.651\times 10^{-3}$  & $3.674\times 10^{-5}$  & $6.276\times 10^{-7}$  & $1.185\times 10^{-8}$ \\
0.90 & $2.947\times 10^{-2}$  & $1.987\times 10^{-3}$  & $1.641\times 10^{-4}$  & $1.494\times 10^{-5}$ \\
1.20 & $6.827\times 10^{-2}$  & $7.876\times 10^{-3}$  & $1.107\times 10^{-3}$  & $1.712\times 10^{-4}$ \\
1.50 & $1.295\times 10^{-1}$  & $2.226\times 10^{-2}$  & $4.632\times 10^{-3}$  & $1.057\times 10^{-3}$ \\
2.00 & $2.888\times 10^{-1}$  & $7.995\times 10^{-2}$  & $2.645\times 10^{-2}$  & $9.532\times 10^{-3}$ \\
\bottomrule
\end{tabular}
\end{table}
\newpage
\subsection{\texorpdfstring{Mild-OTM Region with $\eta$ method}{Mild-OTM Region with eta method}}
Following table \ref{tab:p7-p3-average} summarizes the performance of our polynomial methods for initial seed computation for mild-OTM region.
\begin{table}[ht]
\centering
\caption{Transition-region seed quality: pure $v_{P7}$ versus averaged seed
$v_{\rm avg}=\frac12(v_{P3}+v_{P7})$. The average is applied for
$0.81<|k|< 1.347$ and $1.28$ is the last point inside HH-4 admissibility criteria.}
\label{tab:p7-p3-average}
\small
\begin{tabular}{rrrrrrrr}
\toprule
$k$ & $v$ & $v_{P7}/v$ & $\eta_{P7}$ & HH4
& $v_{\rm avg}/v$ & $\eta_{\rm avg}$ & HH4 \\
\midrule
0.550 & 0.400 & 1.0214 & 2.14\% & 2 & 1.0214 & 2.14\% & 2 \\
0.599 & 0.449 & 1.0050 & 0.50\% & 2 & 1.0050 & 0.50\% & 2 \\
0.647 & 0.497 & 0.9916 & 0.84\% & 2 & 0.9916 & 0.84\% & 2 \\
0.696 & 0.546 & 0.9801 & 1.99\% & 2 & 0.9801 & 1.99\% & 2 \\
0.745 & 0.595 & 0.9697 & 3.03\% & 2 & 0.9697 & 3.03\% & 2 \\
0.793 & 0.643 & 0.9599 & 4.01\% & 2 & 0.9599 & 4.01\% & 2 \\
\midrule
0.842 & 0.692 & 0.9503 & 4.97\% & 2 & 1.0068 & \textbf{0.68\%} & 2 \\
0.891 & 0.741 & 0.9408 & 5.92\% & 2 & 0.9963 & \textbf{0.37\%} & 2 \\
0.939 & 0.789 & 0.9312 & 6.88\% & 2 & 0.9861 & \textbf{1.39\%} & 2 \\
0.988 & 0.838 & 0.9213 & 7.87\% & 2 & 0.9760 & \textbf{2.40\%} & 2 \\
1.037 & 0.887 & 0.9112 & 8.88\% & 2 & 0.9658 & \textbf{3.42\%} & 2 \\
1.085 & 0.935 & 0.9007 & 9.93\% & 2 & 0.9556 & \textbf{4.44\%} & 2 \\
\midrule
1.134 & 0.984 & 0.8900 & 11.00\% & 2 & 0.9453 & \textbf{5.47\%} & 2 \\
1.183 & 1.033 & 0.8789 & 12.11\% & 2 & 0.9348 & \textbf{6.52\%} & 2 \\
1.231 & 1.081 & 0.8676 & 13.24\% & 2 & 0.9242 & \textbf{7.58\%} & 2 \\
\itshape 1.280 & \itshape 1.130 
      & \itshape 0.9706 & \itshape \textbf{2.94 \%} & 2 
      & \itshape 0.9133 & \itshape \textbf{8.67\%} & 2\\
\bottomrule
\end{tabular}
\end{table}
\FloatBarrier

\subsection{Deep OTM Region approximation with Ratio Correction Method }

Table~\ref{tab:ratio} shows a representative sample where Mills-ratio-correction method is best performing. In this region ratio-correction regime is applied as per the algorithm outlined in \ref{sec:IValgo}. We see that updated root $v_{q,1}$ provides a dramatic precision update to the initial seed and we empirically observe that the asymptotic theory works quite well even after $k=1.2$ in this setting.
\begin{table}[h]
\centering
\caption{Initial seed accuracy on the deep-OTM sample. 
$v_q$ denotes the plain quadratic seed and $v_{q,1}$ denotes the 
ratio-corrected seed. The ratio $v_{q,1}/v$ shows how close the corrected 
seed is to the exact volatility. The last column reports the percentage 
error reduction relative to $v_q$.}
\label{tab:ratio}
\small
\begin{tabular}{@{}rrrrrrr@{}}
\toprule
$k$ & $v$ & $v_q$ & $v_{q,1}$ & $v_{q,1}/v$ 
& $|v_q-v|$ & $|v_{q,1}-v|$ (improv.) \\
\midrule
1.347 & 0.650 & 0.5102 & 0.6744 & 1.0375 & 0.1398 & 0.0244 \ (83\%) \\
1.619 & 1.350 & 1.0323 & 1.3406 & 0.9930 & 0.3177 & 0.0094 (97\%) \\
2.047 & 1.350 & 1.0866 & 1.3521 & 1.0016 & 0.2634 & 0.0021 (99\%) \\
3.592 & 1.500 & 1.3216 & 1.5032 & 1.0021 & 0.1784 & 0.0032 (98\%) \\
4.075 & 1.650 & 1.4717 & 1.6502 & 1.0001 & 0.1783 & 0.0003 (99\%) \\
5.290 & 2.000 & 1.8246 & 1.9948 & 0.9974 & 0.1754 & 0.0052 (97\%) \\
\bottomrule
\end{tabular}
\end{table}

\begin{table}[ht]
\centering
\caption{Deep-OTM seed accuracy: exact $\Phi^{-1}$ versus
Soranzo--Epure algebraic $\Phi^{-1}$.  Both seeds satisfy the
HH4 two-correction admissibility condition $\eta_v<13.3\%$.}
\label{tab:SE_accuracy}
\small
\begin{tabular}{@{}rrrrrr@{}}
\toprule
$k$ & $v$ & $c_{\mathrm{seed}}$ & $v_{q,1}$ & $|v_{q,1}^{\mathrm{SE}}-v_{q,1}|$ & $\eta_v^{\mathrm{SE}}$ (\%) \\
\midrule
1.347 & 0.650 & 0.008533 & 0.674355 & $6.6\times10^{-6}$ & 3.748 \\
1.619 & 1.350 & 0.146350 & 1.340638 & $3.9\times10^{-5}$ & 0.691 \\
2.047 & 1.350 & 0.090000 & 1.352130 & $4.3\times10^{-6}$ & 0.158 \\
3.592 & 1.500 & 0.019835 & 1.503179 & $5.1\times10^{-6}$ & 0.212 \\
4.075 & 1.650 & 0.021024 & 1.650248 & $1.9\times10^{-6}$ & 0.015 \\
5.290 & 2.000 & 0.023467 & 1.994842 & $5.1\times10^{-6}$ & 0.258 \\
\bottomrule
\end{tabular}
\end{table}
Table~\ref{tab:SE_accuracy} shows the seed error $\eta_{v} = |v_{q,1}/v-1|$
on the deep-OTM benchmark points.  The Soranzo--Epure seed matches the
rational $\Phi^{-1}$ seed to within $7\times10^{-6}$ across the full
deep-OTM range, confirming that the approximation error is negligible
relative to the HH-4 admissibility threshold of $13.3\%$.
\FloatBarrier
\subsection{Regime based seed overall performance}
 In this table \ref{tab:seed_quality} we present an overall performance and veiew of the initial seed behaviour across all regimes. We emphasize some specific values which are near edge within each regime. An interesting observation is; We immediately see that, when the $\eta$ boundary is crossed, the number of HH-4 updates exceeds the target of $2$ iterations.
 
 Column $n$ gives the number of Householder-4 (HH4) iterations to
$|f|<10^{-14}$.
Ratios $v_0/v$ outside $[0.87,\,1.13]$ are shown in \textbf{\textit{bold italics}};
bold ratios lie within $0.5\%$ of unity.

\begin{table}[ht]
\centering
\caption{Seed approximation quality by regime before HH4 polish.}
\label{tab:seed_quality}
\smallskip
\begin{adjustbox}{max width=0.75\textwidth}
\begin{tabular}{
  R{0.8cm}   
  R{0.7cm}   
  R{1.6cm}   
  R{1.6cm}   
  R{1.4cm}   
  R{1.8cm}   
  C{0.4cm}   
  L{3.2cm}   
}
\toprule
$k$ & $v$ & $c(k,v)$ & $v_0$ & $v_0/v$ & $|v_0-v|$ & $n$ & Regime \\
\midrule

\multicolumn{8}{l}{\textit{ATM regime}\quad ($|k|<0.001$,\ Taylor4 seed)} \\[2pt]
0.000 & 0.10 & 0.039878 & 0.100000 & $\mathbf{1.0000}$ & $4.68\times10^{-14}$ & 1 & ATM \\
0.000 & 1.00 & 0.382925 & 0.999963 & $\mathbf{1.0000}$ & $3.69\times10^{-5}$  & 1 & ATM \\
\midrule

\multicolumn{8}{l}{\textit{Mild-OTM regime}\quad ($0.001\le|k|\le0.81$,\ $v_{P7}$ seed)} \\[2pt]
0.020 & 0.20 & 0.070760 & 0.200000 & $\mathbf{1.0000}$ & $9.81\times10^{-10}$ & 1 & Mild-OTM ($v_{P7}$) \\
0.050 & 0.30 & 0.098338 & 0.300000 & $\mathbf{1.0000}$ & $3.42\times10^{-7}$  & 1 & Mild-OTM ($v_{P7}$) \\
0.100 & 0.30 & 0.079770 & 0.300076 & $\mathbf{1.0003}$ & $7.58\times10^{-5}$  & 1 & Mild-OTM ($v_{P7}$) \\
0.200 & 0.40 & 0.086554 & 0.402811 & 1.0070            & $2.81\times10^{-3}$  & 2 & Mild-OTM ($v_{P7}$) \\
0.300 & 0.50 & 0.096353 & 0.542277 & 1.0846            & $4.23\times10^{-2}$  & 2 & Mild-OTM ($v_{P7}$) \\
0.400 & 0.60 & 0.108015 & 0.644847 & 1.0747            & $4.48\times10^{-2}$  & 2 & Mild-OTM ($v_{P7}$) \\
0.500 & 0.80 & 0.159260 & 0.862029 & 1.0775            & $6.20\times10^{-2}$  & 2 & Mild-OTM ($v_{P7}$) \\
\midrule

\multicolumn{8}{l}{\textit{Tail filter}\quad ($c_\text{seed}<0.02128$,\ $|k|>0.5$,\ Mills $v_{q,1}$ seed)} \\[2pt]
0.600 & 0.40 & 0.015597 & 0.463064 & \textit{\textbf{1.1577}}   & $6.31\times10^{-2}$  & 3 & near-OTM filter \\
0.800 & 0.50 & 0.016942 & 0.551579 & 1.1032   & $5.16\times10^{-2}$  & 2 & near-OTM filter \\
\midrule

\multicolumn{8}{l}{\textit{Transition regime}\quad ($0.81<|k|\le1.155$,\ $\tfrac{1}{2}(v_{P3}+v_{P7})$ seed)} \\[2pt]
0.900 & 0.65 & 0.037253 & 0.701113 & 1.0786            & $5.11\times10^{-2}$  & 2 & $v_{avg}(v_{P3},v_{P7})$ \\
1.000 & 0.70 & 0.038032 & 0.764242 & 1.0918            & $6.42\times10^{-2}$  & 2 & $v_{avg}(v_{P3},v_{P7})$ \\
1.100 & 0.80 & 0.050775 & 0.841488 & 1.0519            & $4.15\times10^{-2}$  & 2 & $v_{avg}(v_{P3},v_{P7})$ \\
\midrule

\multicolumn{8}{l}{\textit{Transition regime}\quad ($1.155<|k|\le1.347$,\ $v_{P3}$ seed)} \\[2pt]
1.200 & 0.90 & 0.064801 & 0.980172 & 1.0891            & $8.02\times10^{-2}$  & 2 & Transition $v_{P3}$ \\
1.250 & 0.95 & 0.072265 & 1.018341 & 1.0719            & $6.83\times10^{-2}$  & 2 & Transition $v_{P3}$ \\
1.300 & 1.00 & 0.080016 & 1.055530 & 1.0555            & $5.55\times10^{-2}$  & 2 & Transition $v_{P3}$ \\
\midrule

\multicolumn{8}{l}{\textit{Deep-OTM regime}\quad ($|k|>1.347$,\ Mills $v_{q,1}$ seed)} \\[2pt]
1.500 & 1.00 & 0.056696 & 1.032031 & $\mathbf{1.0320}$ & $3.20\times10^{-2}$  & 2 & Deep-OTM \\
2.000 & 1.50 & 0.142321 & 1.477687 & $\mathbf{0.9851}$ & $2.23\times10^{-2}$  & 2 & Deep-OTM \\
\bottomrule
\end{tabular}
\end{adjustbox}
\end{table}
\FloatBarrier
\subsection{Iteration Distribution}
Table~\ref{tab:iters} reports the HH-4 update-count distribution on both
benchmark grids, measuring how many polishing iterations are required to
reach the stopping criterion $|f(v)| < 10^{-14}$.

In terms of individual cases, $321$ out of $328$ base points
($97.9\%$) require at most two net corrections on the standard grid,
and $1{,}890$ out of $1{,}970$ ($95.9\%$) on the granular grid.
A further $7$ points on the 328-point grid and $76$ points on the
1970-point grid require exactly three net corrections, so $100\%$ and
$99.8\%$ of all points converge within three net updates respectively.
The remaining cases correspond to isolated corners with extreme-delta 
and low-volatility where the seed error exceeds the formal
 admissibility threshold of HH-4 $\eta_v^* \approx 13.3\%$
(equation~\eqref{eq:hh4_admissible}); all such points converge
without exception within the target accuracy.

For comparison, \citet{Jackel2015} targets two net Householder iterations as the
hallmark of a high-quality initialization, which we targeted and aimed to reduce. The proposed architecture
achieves $1.817$ and $1.886$ net updates on the two grids,
which are below two-iteration benchmark,
confirming that the analytical seed construction meets and slightly reduces it.
\begin{table}[ht]
\centering
\caption{HH4 update-count diagnostics for the best two regime-split seed variants.
The distribution reports the operationally relevant range up to three net updates;
remaining exceptional cases are grouped in the last column.}
\label{tab:iters}
\begin{adjustbox}{max width=1.05\textwidth}
\begin{tabular}{llrrrrr}
\toprule
Grid & Method & Mean net updates & Cases $\leq 2$ iters
& 1 iter & 2 iters & $>2$ iters \\
\midrule
328-point
& Three-band baseline
& 1.8354
& $315/328$ $(96.0\%)$
& 67 & 248 & 13 \\

328-point
& Three-band + near OTM filter
& \textbf{1.8171}
& $\mathbf{321/328}$ $(\mathbf{97.9\%})$
& 67 & 254 & 7 \\

\midrule
1,970-point
& Three-band baseline
& 1.9533
& $1784/1970$ $(90.6\%)$
& 326 & 1458 & 186 \\

1,970-point
& Three-band + near OTM filter
& \textbf{1.8863}
& $\mathbf{1890/1970}$ $(\mathbf{95.9\%})$
& 326 & 1564 & 80 \\
\bottomrule
\end{tabular}
\end{adjustbox}
\end{table}

\pagebreak
\section{Discussions and Numerical Benchmarks}

We compare the benchmark performance of the proposed method against the
state-of-the-art implied-volatility inversion method of \citet{Jackel2015}.
All methods we used are implemented in C (also in Python (Numba)) compiled under the same settings, so that the timing comparison is performed under identical hardware and software
conditions. The experiments were conducted on an HP Spectre laptop with an
Intel Core i7 Evo processor, using MSYS2 UCRT64 GCC compiler. 

As we see in the table \ref{tab:bench_combined}, benchmark against \citet{Jackel2015} (C-to-C benchmark ) implementation shows
stable speedups across both the original and the granular test grids. On the
standard 328-point grid, repeated $5{,}000$ times, the proposed C implementation
achieves \textbf{86.7-112.7~ns/IV} (best among various runs) versus \textbf{162.1-197.9~ns/IV} for \citet{Jackel2015}, corresponding to a \textbf{1.75-1.87$\times$ speedup}, with maximum
absolute error $\mathbf{8.527\times 10^{-14}}$. On the more demanding 1,970-point
granular grid, also repeated $5{,}000$ times, the method achieves
\textbf{98.2-124.2~ns/IV} versus \textbf{176.9-209.1~ns/IV} for Jäckel's implementation,
corresponding to a \textbf{1.68-1.802$\times$ speedup}, with maximum absolute error
$\mathbf{6.026\times 10^{-13}}$. The similar speedup on the granular grid indicates that
the performance improvement is not specific to the original benchmark grid. Moreover, we can also observe from table \ref{tab:bench_combined} that the regime-split method benefits CPU speed (in terms of \% time decrease) and modern architecture slightly more than \citet{Jackel2015} methodology. 

Moreover, we also report some benchmark results from alternative environments and we clearly see that the speed-up ratio remains consistent across all three different CPU and compiler settings.
\begin{table}[ht]
\centering
\caption{Benchmark comparison across
         platforms and compilers.
         Timings report the best of 5 runs over
         $328 \times 5{,}000 = 1{,}640{,}000$ IV inversions.
         }
\label{tab:bench_combined}
\smallskip
\begin{tabular}{llllrrc}
\toprule
Platform & Compiler & Grid & Method & ns/IV & Max.\ error & Speedup \\
\midrule

\multirow{6}{*}{\shortstack[l]{IBM Thinkpad\\Intel ultra 7 Evo\\Windows}}
  & \multirow{6}{*}{\shortstack[l]{MSYS2 UCRT64\\GCC 16.1}}
  & \multirow{2}{*}{328-pt}
    & \citet{Jackel2015}          & 162.1 & $1.679\times10^{-15}$ & \multirow{2}{*}{\textbf{1.869$\times$}} \\
  &  &  & Regime-split (HH-4) & 86.7 & $8.53\times10^{-14}$ & \\
\cmidrule(lr){3-7}
  &  & \multirow{2}{*}{1970-pt}
    & \citet{Jackel2015}          & 176.9 & $7.327\times10^{-15}$ & \multirow{2}{*}{\textbf{1.802$\times$}} \\
  &  &  & Regime-split (HH-4) & 98.2 & $6.026\times10^{-13}$ & \\
\midrule

\multirow{6}{*}{\shortstack[l]{HP Spectre\\Intel i7 Evo\\Windows}}
  & \multirow{6}{*}{\shortstack[l]{MSYS2 UCRT64\\GCC 16.1}}
  & \multirow{2}{*}{328-pt}
    &  \citet{Jackel2015}          & 197.9 & $1.679\times10^{-15}$ & \multirow{2}{*}{\textbf{1.752$\times$}} \\
  &  &  & Regime-split (HH-4) & 112.7 & $8.53\times10^{-14}$ & \\
\cmidrule(lr){3-7}
  &  & \multirow{2}{*}{1970-pt}
    &  \citet{Jackel2015}          & 209.1 & $7.327\times10^{-15}$ & \multirow{2}{*}{\textbf{1.684$\times$}} \\
  &  &  & Regime-split (HH-4) & 124.2 & $6.026\times10^{-13}$ & \\
\midrule

\multirow{2}{*}{\shortstack[l]{Linux}}
  & \multirow{2}{*}{\shortstack[l]{GCC 13.3}}
  & \multirow{2}{*}{328-pt}
    &  \citet{Jackel2015}          & 339.5 & $1.221\times10^{-15}$ & \multirow{2}{*}{\textbf{1.783$\times$}} \\
  &  &  & Regime-split (HH-4) & 190.4 & $8.53\times10^{-14}$ & \\
\midrule

\multirow{2}{*}{\shortstack[l]{Intel Xeon\\Gold 8160\\(server)}}
  & \multirow{2}{*}{\shortstack[l]{Clang}}
  & \multirow{2}{*}{328-pt}
    &  \citet{Jackel2015}          & 468.8 & $1.776\times10^{-15}$ & \multirow{2}{*}{\textbf{1.565$\times$}} \\
  &  &  & Regime-split (HH-4) & 299.6 & $7.316\times10^{-14}$ & \\

\bottomrule
\end{tabular}

\end{table}



\subsection{Python (Numba) Benchmark}
As an implementation robustness check, we also benchmark the final algorithm in
Python (Numba). The Python (Numba) implementation retains the same initial seed architecture and HH-4
all-regime update strategy as the C implementation. The results are included
as a portability benchmark in a Python implementation context.

As we see in the table \ref{tab:iv_benchmark_numba_logit_hh4all} the Python/Numba implementation of the final
HH-4 method delivers strong performance in addition to C benchmark. In serial Numba,
the best run is \textbf{134.2~ns/IV}, and with 12 threads the best run reaches
\textbf{24.1~ns/IV}, with maximum absolute error $7.33\times10^{-14}$.
These results show that the proposed regime-split structure is efficient both
as a compiled C implementation and in a Python/Numba implementation context.
\begin{table}[h]
\centering
\caption{Numba benchmark for the HH-4 all-regime implementation}
\label{tab:iv_benchmark_numba_logit_hh4all}
\begin{tabular}{lccccc}
\toprule
\textbf{Threads} & \textbf{Best ns/IV} & \textbf{Mean ns/IV} 
& \textbf{Median ns/IV} & \textbf{Std} & \textbf{Maximum error} \\
\midrule
1  & 134.2 & 136.6 & 136.7 & 1.8 & $7.33 \times 10^{-14}$ \\
2  & 72.2  & 74.5  & 72.5  & 3.1 & $7.33 \times 10^{-14}$ \\
4  & 47.1  & 47.8  & 47.6  & 0.7 & $7.33 \times 10^{-14}$ \\
6  & 34.6  & 36.3  & 36.2  & 1.3 & $7.33 \times 10^{-14}$ \\
12 & 24.1  & 26.0  & 24.6  & 2.2 & $7.33 \times 10^{-14}$ \\
\bottomrule
\end{tabular}

\vspace{0.2cm}
\footnotesize
\emph{Notes:} The benchmark reports repeated runs on the 328-point
grid repeated $5{,}000$ times. The Numba results are included as a portability
benchmark in Python implementation context.
\end{table}
\section{Conclusion}
\label{sec:conc}

In this study, we presented a regime-split solver for Black--Scholes implied
volatility based on closed-form analytical initial seeds tailored to different
regions of the moneyness--volatility domain. The proposed method combines
asymptotic reasoning, Taylor-based approximations, Mills-ratio arguments, and
higher-order root-polishing in a unified numerical framework. The mathematical
derivations show why each seed is appropriate in its corresponding regime and
how these analytical initial values contribute to fast convergence toward the
implied-volatility root.

The numerical results show that the proposed initialisation strategy, combined
with a fourth-order Householder update, recovers implied volatility to near
machine precision. Benchmarks against the reference implementation of \citep{Jackel2015} demonstrate stable speedups on both the standard
benchmark grid and a more granular grid including extreme delta values. The
observed maximum errors remain close to double-precision accuracy, indicating
that the performance gain is obtained without sacrificing numerical precision.
Additional Numba experiments further show that the method is not restricted to
a hand-optimised C implementation, but also remains efficient in a Python
implementation context.

Overall, the results suggest that carefully constructed analytical seeds can
still provide meaningful computational gains for implied-volatility inversion.
The proposed regime-split structure is simple to implement, mathematically
motivated, and robust across different grid configurations. Future work may
focus on further refining the analytical seed approximations, extending the
same methodology to alternative representations of the Black--Scholes inverse
problem, as studied in \cite{Schadner2026}.


\section*{Use of Artificial Intelligence}
The authors used LLM-AI tools for language editing and formatting assistance. All mathematical derivations, numerical results, code and conclusions were checked and verified by the authors.
\section*{Declarations of Interest}
The authors report no conflicts of interest. The authors alone are responsible for the content and writing of the paper.
\section*{Reproducibility}
The source code and benchmark scripts are supplied as anonymized supplementary material. A public repository will be provided upon acceptance.
The content does not reflect the views of neither European Investment Bank or Central Bank of UAE.
\section*{Acknowledgements}
Omitted for anonymous review.
We thank Prof. Gianluca Fusai for his valuable suggestions and guidance and our colleague Ricardo Parra Quintero for his support on implementations.

\section{Appendix}{\label{sec:app}}
\subsection{\texorpdfstring{Derivation of the ATM initial seed $v_0$ via Taylor expansion}{Derivation of the ATM initial seed v0 via Taylor expansion}}
\label{sec:taylor-bs}

Write the BS call price as in~\eqref{eq:bs} and expand
$\Phi(d_1-v)$ around $d_1$ using $\phi^{(n)}(x)=(-1)^n H_n(x)\phi(x)$:
\begin{equation}
  c = (1-e^k)\Phi(d_1)
      + e^k\phi(d_1)\!\left[v + \tfrac{v^2}{2}d_1 - \tfrac{v^3}{6}(d_1^2-1) + O(v^4)\right].
  \label{eq:c-expanded}
\end{equation}

\begin{proposition}[ATM limit]
\label{prop:atm}
As $k\to 0$, equation~\eqref{eq:c-expanded} reduces to $c=v/\sqrt{2\pi}+O(v^3)$,
which is the \citet{BS88} first-order estimate.
The full series inversion yields the coefficients~\eqref{eq:coeffs}.
\end{proposition}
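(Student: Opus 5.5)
At $k=0$ the bracket in \eqref{eq:c-expanded} no longer carries the factor $(1-e^k)$, and the exact identity \eqref{eq:atm-id_in} is available. The plan is therefore to prove the first claim directly from that identity, and then to obtain the coefficients by inverting a power series.

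\textbf{Step 1: the first-order limit.} Set $k=0$ in \eqref{eq:c-expanded}. The term $(1-e^k)\Phi(d_1)$ vanishes and $d_1=v/2$. The bracket becomes $v+\tfrac{v^3}{4}-\tfrac{v^3}{6}(\tfrac{v^2}{4}-1)+O(v^4)$, which is $v+O(v^3)$. The prefactor is $\phi(v/2)=\tfrac{1}{\sqrt{2\pi}}(1-v^2/8+O(v^4))$. Multiplying gives $c=v/\sqrt{2\pi}+O(v^3)$, the \citet{BS88} estimate.

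Because the truncated expansion \eqref{eq:c-expanded} only controls terms up to $O(v^4)$, it cannot deliver the higher coefficients. For those I would use the exact identity $c=2\Phi(v/2)-1$ instead.

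\textbf{Step 2: the forward series.} Integrate the Maclaurin series of $\phi$ term by term:
\[
2\Phi(v/2)-1=\frac{2}{\sqrt{2\pi}}\int_0^{v/2}e^{-t^2/2}\,dt
=\frac{1}{\sqrt{2\pi}}\sum_{n\ge0}\frac{(-1)^n v^{2n+1}}{n!\,2^{3n}(2n+1)}.
\]
Multiplying by $\sqrt{2\pi}$, with $s=\sqrt{2\pi}\,c$, gives
\[
s=v-\frac{v^3}{24}+\frac{v^5}{640}-\frac{v^7}{21504}+O(v^9).
\]
This series is odd with leading coefficient $1$. By the Lagrange inversion theorem, its inverse is an odd analytic series $v=s(1+b_1s^2+b_2s^4+b_3s^6)+O(s^9)$, which is at least as strong as the $O(s^8)$ remainder in the statement.

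\textbf{Step 3: matching coefficients.} Substitute the ansatz into the forward series and match the coefficients of $s^3$, $s^5$ and $s^7$, solving successively:
\begin{itemize}
\item at order $s^3$, $b_1-\tfrac{1}{24}=0$, so $b_1=\tfrac{1}{24}$;
\item at order $s^5$, $b_2-\tfrac{3b_1}{24}+\tfrac{1}{640}=0$, so $b_2=\tfrac{1}{192}-\tfrac{1}{640}=\tfrac{7}{1920}$;
\item at order $s^7$, collect the contributions $b_3$, $-\tfrac{1}{24}(3b_2+3b_1^2)$, $\tfrac{5b_1}{640}$ and $-\tfrac{1}{21504}$, which should give $b_3=\tfrac{127}{322560}$.
\end{itemize}

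A cleaner cross-check is available. Since $v=2\sqrt2\,\mathrm{erf}^{-1}(c)$, the known Maclaurin series of $\mathrm{erf}^{-1}$, namely $\tfrac{\sqrt\pi}{2}\bigl(x+\tfrac{\pi}{12}x^3+\tfrac{7\pi^2}{480}x^5+\tfrac{127\pi^3}{40320}x^7+\cdots\bigr)$, gives the same coefficients directly after rescaling $x=c$.

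\textbf{Main obstacle.} The only delicate step is the $s^7$ bookkeeping, where several cross terms combine. I would verify it against the $\mathrm{erf}^{-1}$ series above. The successive truncations $v_0^{(2)},v_0^{(3)},v_0^{(4)}$ then follow immediately.
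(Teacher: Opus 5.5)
Your proposal is correct and follows essentially the paper's own route: the exact identity $c_{\mathrm{tv}}=2\Phi(v/2)-1$ at $k=0$, the odd forward series through $v^7$, and an odd-power ansatz solved order by order; your $s^7$ bookkeeping does close, since $\tfrac{31}{46080}-\tfrac{1}{3072}+\tfrac{1}{21504}=\tfrac{127}{322560}$. Your Step 1 also spells out the first-order reduction of \eqref{eq:c-expanded}, which the paper leaves implicit, and you are right to use that expansion only up to $O(v^3)$: its printed $v^3$ term has a sign slip (it should read $+\tfrac{v^3}{6}(d_1^2-1)$), which would give the wrong cubic coefficient but does not affect your conclusion.
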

\begin{proof}
At $k=0$, the normalised Black--Scholes time value satisfies
\begin{equation}
  \ctv = 2\Phi(v/2)-1 = \mathrm{erf}\!\left(\frac{v}{2\sqrt{2}}\right).
  \label{eq:atm-id}
\end{equation}
With $s=\sqrt{2\pi}\,\ctv$, expansion around $v=0$ gives
\begin{equation}
  s = v - \frac{v^3}{24} + \frac{v^5}{640} - \frac{v^7}{21504} + O(v^9).
  \label{eq:forward}
\end{equation}
Inverting~\eqref{eq:forward} via the odd-power ansatz
$v=s(1+as^2+bs^4+ds^6)+O(s^9)$ and solving the resulting linear system yields
\begin{equation}
  a=\frac{1}{24},\qquad b=\frac{7}{1920},\qquad d=\frac{127}{322560},
  \label{eq:coeffs}
\end{equation}
so the inverse series is
\begin{equation}
  v = s\!\left(1+\frac{s^2}{24}+\frac{7s^4}{1920}+\frac{127s^6}{322560}\right)+O(s^8).
  \label{eq:inverse}
\end{equation}
The three ATM seeds used in the benchmark are successive truncations:
\begin{align}
  v_0^{(2)} &= s\!\left(1+\tfrac{s^2}{24}\right),\label{eq:T2}\\
  v_0^{(3)} &= s\!\left(1+\tfrac{s^2}{24}+\tfrac{7s^4}{1920}\right),\label{eq:T3}\\
  v_0^{(4)} &= s\!\left(1+\tfrac{s^2}{24}+\tfrac{7s^4}{1920}+\tfrac{127s^6}{322560}\right).\label{eq:T4}
\end{align}
The first order expansion corresponds to formula in  \citet{BS88}  where $v_{0}=s=\sqrt{2 \pi} c_{tv}$.
\end{proof}

\begin{remark}
The Black--Scholes price has no convergent Taylor series in $v$ around
$v=0$ for $k\neq 0$ \cite{Jackel2015}: all derivatives vanish identically.
This is why no polynomial ATM seed can be extended to an OTM seed.
\end{remark}

\subsection{\texorpdfstring{Derivation of mild-OTM initial seed $v_0$}{Derivation of mild-OTM initial seed v0}}
\subsubsection{\texorpdfstring{A $v$ expansion formula for option price}{A v expansion formula for option price}}
Let $v=\sigma\sqrt{T}$ denote total volatility and let $k>0$ denote log-moneyness for an out-of-the-money call. The normalized Black--Scholes OTM call price is
\begin{equation}
  c(k,v)=\Phi(d_1)-e^k\Phi(d_2),
  \qquad
  d_1=-\frac{k}{v}+\frac{v}{2},
  \qquad
  d_2=-\frac{k}{v}-\frac{v}{2}.
  \label{eq:bs_otm_call}
\end{equation}
The aim is to construct a cheap mild-OTM seed for solving $c(k,v)=c$ when the observed option price $c$ is given.

For the purpose of an explicit approximation to \eqref{eq:bs_otm_call}, we apply a Taylor expansion approach. Around the origin this gives the odd polynomial expansion
\begin{equation}
\Phi(x)=P_{9}
=
\frac{1}{2}
+\frac{x}{\sqrt{2\pi}}
-\frac{x^{3}}{6\sqrt{2\pi}}
+\frac{x^{5}}{40\sqrt{2\pi}}
-\frac{x^{7}}{336\sqrt{2\pi}}
+O(x^{9}).
  \label{eq:p9_def}
\end{equation}
with coefficients
\begin{align}
a_1 &= 0.3989422804, &
a_3 &= -0.0664903801, \\
a_5 &= 0.0099735570, &
a_7 &= -0.0011873282, &
a_9 &= 0.0001154347.
\end{align}
Substituting $P_9$ into \eqref{eq:bs_otm_call} gives a mild-OTM polynomial price approximation
\begin{equation}
  c_{P_9}(k,v)
  =
  P_9\!\left(-\frac{k}{v}+\frac{v}{2}\right)
  -e^k
  P_9\!\left(-\frac{k}{v}-\frac{v}{2}\right).
  \label{eq:cp9}
\end{equation}
Equivalently,
\begin{equation}
  c_{P_9}(k,v)
  =
  \frac{1-e^k}{2}
  +\sum_{j=0}^{4}a_{2j+1}
  \left[
  \left(-\frac{k}{v}+\frac{v}{2}\right)^{2j+1}
  -e^k
  \left(-\frac{k}{v}-\frac{v}{2}\right)^{2j+1}
  \right].
  \label{eq:cp9_sum}
\end{equation}
This formula expresses the approximate option price explicitly in terms of $v$ through powers of $v$ and $1/v$.
\subsubsection{\texorpdfstring{An analytic expression for initial seed $v_0$}{n analytic expression for initial seed v0}}
Next step, we derive an analytic expression for $v_0$. 

For this purpose, keeping only the linear approximation
\begin{equation}
  P_1(x)=\frac{1}{2}+a_1x,
\end{equation}
we obtain
\begin{align}
  c_{P_1}(k,v)
  &=
  \frac{1-e^k}{2}
  +a_1\left[
  \left(-\frac{k}{v}+\frac{v}{2}\right)
  -e^k\left(-\frac{k}{v}-\frac{v}{2}\right)
  \right] \\
  &=
  \frac{1-e^k}{2}
  +a_1\left[
  \frac{(e^k-1)k}{v}
  +\frac{(1+e^k)v}{2}
  \right].
  \label{eq:cp1}
\end{align}
Given the observed normalized OTM call price $c$, define
\begin{equation}
  h=c-\frac{1-e^k}{2},
  \qquad
  A=a_1(e^k-1)k,
  \qquad
  B=\frac{a_1(1+e^k)}{2}.
  \label{eq:hab}
\end{equation}
Then \eqref{eq:cp1} becomes
\begin{equation}
  h\approx \frac{A}{v}+Bv.
  \label{eq:linear_approx}
\end{equation}
Multiplying by $v$ yields the quadratic equation
\begin{equation}
  Bv^2-hv+A=0.
  \label{eq:quadratic}
\end{equation}
Therefore, the explicit first-order polynomial seed is
\begin{equation}
  v_{P_1}
  =
  \frac{h+\sqrt{h^2-4AB}}{2B}.
  \label{eq:vp1}
\end{equation}
The larger root is used for the practical mild-OTM branch. The smaller root corresponds to a low-volatility artefact of the first-order approximation.

\subsubsection{\texorpdfstring{Further Taylor expansion for $e^k$}{Further Taylor expansion for ek}}\label{sec:t4_reduction}

For $|k|\leq 0.5$, the exponential $e^k$ is replaced by its fourth-order Taylor
polynomial
\begin{equation}
  e^k \;\approx\; 1+\varepsilon,
  \qquad
  \varepsilon = k+\frac{k^2}{2}+\frac{k^3}{6}+\frac{k^4}{24}.
  \label{eq:t4}
\end{equation}
The relative error of \eqref{eq:t4} at the boundary $k=0.5$ is $0.17\%$, which is
entirely absorbed by the subsequent Halley  step.

Substituting \eqref{eq:t4} into \eqref{eq:hab} gives (again c is call-price),
\begin{equation}
  h = c+\frac{\varepsilon}{2},
  \qquad
  A = a_1\,\varepsilon\,k,
  \qquad
  B = \frac{a_1(2+\varepsilon)}{2},
  \label{eq:hab_t4}
\end{equation}
where $h$, $A$, and $B$ are now polynomials in $k$, with $c$ entering only
additively in $h$.

Next, dividing \eqref{eq:quadratic} by $B$ yields the monic form,
\begin{equation}
  v^2 - u\,v + P = 0,
  \label{eq:monic}
\end{equation}
where the sum and product of roots are, 
\begin{equation}
  u = \frac{h}{B} = \frac{2c+\varepsilon}{a_1(2+\varepsilon)},
  \qquad
  P = \frac{A}{B} = \frac{2k\varepsilon}{2+\varepsilon}.
  \label{eq:vieta}
\end{equation}
A key structural observation follows from \eqref{eq:vieta}: $P$ depends
\emph{only} on $k$, not on $c$, while $c$ enters exclusively through $u$.

\subsubsection{Final algebraic form}

The solution $v_{P_1} = \tfrac{1}{2}(u+\sqrt{u^2-4P})$ of \eqref{eq:monic}
can be written without introducing $h$, $A$, $B$ as intermediate variables.
Substituting \eqref{eq:vieta} into the solution of \eqref{eq:monic} and clearing the common denominator
$a_1(2+\varepsilon)$ gives
\begin{equation}
  v_{P_1}
  =
  \frac{2c+\varepsilon+\sqrt{N}}{2\,a_1\,(2+\varepsilon)}.
  \label{eq:vp1_clean}
\end{equation}
where
\begin{equation}
  \varepsilon = k+\frac{k^2}{2}+\frac{k^3}{6}+\frac{k^4}{24},
  \qquad
  N = (2c+\varepsilon)^2 - 8\,a_1^2\,k\,\varepsilon\,(2+\varepsilon).
  \label{eq:N_def}
\end{equation}
Equation \eqref{eq:vp1_clean} is algebraically identical to \eqref{eq:vp1}
under the $4^{th}$ order Taylor substitution for function $\exp(x)$; the reformulation requires only two
polynomial quantities in $k$, namely $\varepsilon$ and $2+\varepsilon$,
together with the observable $c$.

\paragraph{Fallback when $N\leq 0$.}
When $N\leq 0$ (a low-volatility artifact that occurs when the observed price is
near the lower limit), the discriminant is cut to zero and the initial seed
reduces to
\begin{equation}
  v_{P_1}\big|_{N=0}
  =
  \frac{2c+\varepsilon}{2\,a_1\,(2+\varepsilon)},
  \label{eq:vp1_clip}
\end{equation}
which corresponds to taking $h/(2B)$ in the original formulation
\eqref{eq:vp1}.
\subsection{Proof of Proposition \ref{prop:vp3}}
Extend the slope approximation by one term:
\[
  \Phi(x) \approx \tfrac{1}{2} + \aone x + \athree x^3,
  \qquad
  \athree = -\tfrac{\aone}{6} = -\tfrac{1}{6\sqrt{2\pi}}.
\]
The corresponding  BS price residual is
\begin{equation}\label{eq:F3}
  F_3(v) = \frac{C_{-3}}{v^3} + \frac{C_{-1}}{v} + C_1\,v + C_3\,v^3 - h,
\end{equation}
with coefficients (using $\athree = -\aone/6$ exactly)
\begin{align}
  C_{-3} &= -\tfrac{\aone\eps\kk^3}{6},
  \nonumber\\
  C_{-1} &= \aone\kk\!\left[\eps - \kk\!\left(\tfrac{\eps}{2}+1\right)\right],
  \nonumber\\
  C_1    &= \aone\!\left[(1+\tfrac{\eps}{2}) - \tfrac{\eps\kk}{8}\right],
  \label{eq:c1}\\
  C_3    &= -\tfrac{\aone(\eps+2)}{24}.
  \label{eq:c3}
\end{align}
The derivative is
\begin{equation}\label{eq:dF3}
  F_3'(v)
  = -\frac{3C_{-3}}{v^4} - \frac{C_{-1}}{v^2} + C_1 + 3C_3\,v^2.
\end{equation}
\begin{proof}
Applying a  one-step Newton correction of $\vP$ on $F_3$ we get,
\begin{equation}\label{eq:vP3_app}
    \vPt = \vP\!\left(1 + \frac{\mathcal{G}_3(w)}{\Dden_{3}(w)}\right)
           \bigg|_{w=\vP} 
\end{equation}
where, with $w = \vP$ and setting $\delta = 2+\eps$,
\begin{align}
  \mathcal{G}_{3}(w) &=
    -4\eps\kk^3
    + 24\kk\!\left[\eps - \kk\!\left(\tfrac{\eps}{2}+1\right)\right]w^2
    + \left[24\!\left(1+\tfrac{\eps}{2}\right) - 3\eps\kk\right]w^4
    - \delta\,w^6
    - \frac{24h\,w^4}{\aone},
    \label{eq:N}\\[4pt]
  \Dden_{3}(w) &=
    12\eps\kk^3
    - 24\kk\!\left[\eps - \kk\!\left(\tfrac{\eps}{2}+1\right)\right]w^2
    + \left[24\!\left(1+\tfrac{\eps}{2}\right) - 3\eps\kk\right]w^4
    - 3\delta\,w^6.
    \label{eq:D}
\end{align}
The Newton step on $F_3$ at $v = w$ is
$-F_3(w)/F_3'(w)$.
Multiplying numerator and denominator by $24w^4/\aone$ clears all
negative powers of $w$.  Since $\aone$ factors from every term of
$F_3$ and $F_3'$, it cancels in the ratio except in the inhomogeneous
term $h/\aone$ in $\mathcal{G}_{3}$.  Direct substitution yields
\eqref{eq:N}--\eqref{eq:D}, and $\vPt = \vP - F_3(\vP)/F_3'(\vP) =
\vP(1 + \mathcal{G}(\vP)/\Dden(\vP))$.
\end{proof}
\begin{remark} \textbf{shared structure of $\mathcal{G}_{3}$ and $\Dden_{3}$.}

$\mathcal{G}_{3}$ and $\Dden_{3}$ share four of their five terms.
Defining the common polynomial
\[
  Q(w) =
    24\kk\!\left[\eps - \kk\!\left(\tfrac{\eps}{2}+1\right)\right]w^2
    + \left[24\!\left(1+\tfrac{\eps}{2}\right) - 3\eps\kk\right]w^4
    - \delta\,w^6,
\]
we have
\begin{align}
  \mathcal{G}_{3}(w) &= -4\eps\kk^3 + Q(w) - \tfrac{24h}{\aone}\,w^4,
  \label{eq:Nq}\\
  \Dden_{3}(w) &= 12\eps\kk^3 + Q(w) - 2\delta\,w^6,
  \label{eq:Dq}
\end{align}
so $Q(w)$ need be computed only once which provides a computational efficiency.
\end{remark} 
\subsection{Proof of Proposition \ref{prop:vp7}}

\begin{proof}
Define
\begin{equation}
P_7(x)
=
\tfrac{1}{2}
+
a_1 x
+
a_3 x^3
+
a_5 x^5
+
a_7 x^7 .
\label{eq:P7}
\end{equation}

The $7^{th}$-order Taylor-expansion-based Black--Scholes residual at $v=w$ is
\begin{equation}
  F_7(w)
  =
  P_7(d_1)
  -
  E\,P_7(d_2)
  -
  c_{\mathrm{seed}},
  \qquad
  d_{1,2}
  =
  -\frac{\kappa}{w}
  \pm
  \frac{w}{2}.
  \label{eq:F7}
\end{equation}
Its derivative is
\begin{equation}
  F_7'(w)
  =
  P_7'(d_1)\,d_1'
  -
  E\,P_7'(d_2)\,d_2',
  \qquad
  d_1'
  =
  \frac{\kappa}{w^2}
  +
  \frac{1}{2},
  \quad
  d_2'
  =
  \frac{\kappa}{w^2}
  -
  \frac{1}{2},
  \label{eq:F7prime}
\end{equation}
where
\begin{equation}
P_7'(x)
=
a_1
+
3a_3x^2
+
5a_5x^4
+
7a_7x^6 .
\end{equation}

The one-step Newton correction gives
\begin{equation}
  v_{P7}
  =
  w
  -
  \frac{F_7(w)}{F_7'(w)}
  \bigg|_{w=v_{P1}} .
  \label{eq:vp7_newton}
\end{equation}

Define
\begin{equation}
  \mathcal{G}_7(w)
  =
  -S_7(w)
  +
  Q_7(w)
  -
  R_7(w),
  \qquad
  \Dden_7(w)
  =
  7S_7(w)
  +
  Q_7(w)
  -
  R_7(w)
  +
  \Delta_7(w).
  \label{eq:G7D7}
\end{equation}
Here
\begin{align}
S_7(w)
&=
  \frac{a_7(1+E)}{128}\,w^{14}
  +
  \frac{\left(a_5+\frac{7}{2}a_7\kappa\right)(1+E)}{32}\,w^{12}
\notag\\
&\quad
  +
  \frac{\left(a_3+\frac{5}{2}a_5\kappa+\frac{21}{4}a_7\kappa^2\right)(1+E)}{8}\,w^{10}
\notag\\
&\quad
  +
  \frac{\left(a_1+\frac{3}{2}a_3\kappa+\frac{5}{2}a_5\kappa^2
           +\frac{35}{8}a_7\kappa^3\right)(1+E)}{2}\,w^{8},
\label{eq:S7}
\end{align}
and
\begin{align}
Q_7(w)
&=
  \frac{\kappa}{8}
  \left(
  8a_1
  -
  12a_3\kappa
  +
  20a_5\kappa^2
  -
  35a_7\kappa^3
  \right)(1-E)\,w^6
\notag\\
&\quad
  +
  \frac{\kappa^3}{4}
  \left(
  4a_3
  -
  10a_5\kappa
  +
  21a_7\kappa^2
  \right)(1-E)\,w^4
\notag\\
&\quad
  +
  \kappa^5
  \left(
  a_5
  -
  \tfrac{7}{2}a_7\kappa
  \right)(1-E)\,w^2
  +
  a_7\kappa^7(1-E),
\label{eq:Q7}
\end{align}
and
\begin{equation}
  R_7(w)
  =
  \left(
  c_{\mathrm{seed}}
  -
  \tfrac{1}{2}
  +
  \tfrac{E}{2}
  \right)
  \frac{w^7}{a_1}.
  \label{eq:R7}
\end{equation}
Finally,
\begin{align}
\Delta_7(w)
&=
   6\,\frac{a_7(1+E)}{128}\,w^{14}
  +
  4\,\frac{\left(a_5+\frac{7}{2}a_7\kappa\right)(1+E)}{32}\,w^{12}
\notag\\
&\quad
  +
  2\,\frac{\left(a_3+\frac{5}{2}a_5\kappa+\frac{21}{4}a_7\kappa^2\right)(1+E)}{8}\,w^{10}.
\label{eq:Delta7}
\end{align}

Equivalently, these quantities satisfy
\begin{equation*}
  \mathcal{G}_7(w)
  =
  -w\,F_7(w)\frac{w^6}{a_1},
  \qquad
  \Dden_7(w)
  =
  w^2 F_7'(w)\frac{w^6}{a_1}.
\end{equation*}
Expanding $P_7(d_1)$ and $P_7(d_2)$, collecting terms, and plugging
$w=v_{P1}$ into the one-step Newton correction, we obtain
\begin{equation}
  v_{P7}
  =
  v_{P1}
  \left(
  1
  +
  \frac{\mathcal{G}_7(v_{P1})}{\Dden_7(v_{P1})}
  \right).
  \label{eq:vp7_final}
\end{equation}
\end{proof}

\subsection*{The proof of proposition \ref{prop:vq1_SE}}
\subsubsection*{An analytical invertible approximation}

\citet{Soranzo2012} give the following approximation to
$\Phi(x)$ for $x\ge0$, accurate to $|\varepsilon(x)|<1.14\times10^{-5}$:
\begin{equation}
  \Phi(x)
  \approx \frac{1}{2}
  + \frac{1}{2}\sqrt{1 - e^{E(x^2)}},
  \qquad
  E(u) = \frac{-\beta_1 u - \tfrac{\alpha_1}{2} u^2}
              {2 + \beta_2 u + \tfrac{\alpha_2}{2} u^2},
  \label{eq:SE_forward}
\end{equation}
with constants
\[
  \beta_1 = 1.2735457,\quad
  \beta_2 = 0.1480931,\quad
  \alpha_1 = 0.1487936,\quad
  \alpha_2 = 0.0005160.
\]
Since~\eqref{eq:SE_forward} expresses $\Phi(x)$ through a rational
function of $x^2$ composed with an exponential and a square root,
it is explicitly invertible by solving a quadratic equation.

\subsubsection*{Algebraic inversion}

\begin{proposition}[\citet{Soranzo2012} $\Phi^{-1}$]
\label{prop:SE_inv}
For $0<p<1$, let
\[
  q = 2\max(p,1-p)-1\;\in[0,1),
  \qquad
  L = \log\!\bigl(1-q^2\bigr) < 0.
\]
Define the quadratic coefficients
\begin{equation}
  A(L) = -\tfrac{\alpha_1}{2} - \tfrac{\alpha_2}{2}L,
  \qquad
  B(L) = -\beta_1 - \beta_2 L,
  \qquad
  C(L) = -2L,
  \label{eq:SE_ABC}
\end{equation}
and let $u^+(L)$ denote the positive root of $A(L)u^2+B(L)u+C(L)=0$,
\begin{equation}
  u^+(L) = \frac{-B(L) - \operatorname{sign}(A(L))\sqrt{B(L)^2 - 4A(L)C(L)}}{2A(L)}.
  \label{eq:SE_u}
\end{equation}
Then $\Phi^{-1}(p)\approx\operatorname{sign}(p-\tfrac{1}{2})\sqrt{u^+(L(p))}$,
with $|$error$|<2.5\times10^{-3}$ for $|x|\le3$ and negligible for $|x|\le2$.
\end{proposition}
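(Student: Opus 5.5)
The plan is to invert the forward approximation \eqref{eq:SE_forward} algebraically, then quote the Soranzo--Epure error bound for the accuracy claim. First I reduce to $x\ge0$. For $p\ge\tfrac12$ I set $x=\Phi^{-1}(p)\ge0$. For $p<\tfrac12$ I use the symmetry $\Phi(-x)=1-\Phi(x)$, which sends the problem to $1-p$. Both cases are covered by $\max(p,1-p)$, and the sign of the answer is restored at the end by $\operatorname{sign}(p-\tfrac12)$.

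With $P=\max(p,1-p)$, the forward relation reads $P=\tfrac12+\tfrac12\sqrt{1-e^{E(u)}}$ with $u=x^2$. This gives $q=2P-1=\sqrt{1-e^{E(u)}}$, so $e^{E(u)}=1-q^2$ and $E(u)=L$. Each of these steps is a monotone bijection on the relevant range. Since $q\in[0,1)$ we get $L\in(-\infty,0]$, and $L<0$ whenever $p\neq\tfrac12$.

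Next I clear the denominator in $E(u)=L$:
\[
-\beta_1u-\tfrac{\alpha_1}{2}u^2=L\bigl(2+\beta_2u+\tfrac{\alpha_2}{2}u^2\bigr).
\]
This rearranges to $A(L)u^2+B(L)u+C(L)=0$, with $A,B,C$ exactly as in \eqref{eq:SE_ABC}. One check is needed here: the denominator $2+\beta_2u+\tfrac{\alpha_2}{2}u^2$ is positive for $u\ge0$, because all constants are positive. So no root is lost or introduced by the multiplication.

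The main step is selecting the correct root and showing it is the positive one. Since $C=-2L>0$, the product of the roots is $C/A$, and the sum is $-B/A$. There are two cases.
\begin{itemize}
\item When $A<0$, which holds for moderate $|L|$ such as $|L|<\alpha_1/\alpha_2$, the product is negative. There is then exactly one positive root, and checking the formula with $\operatorname{sign}(A)=-1$ shows it returns $(-B+\sqrt{B^2-4AC})/(2A)$. I need to confirm this is positive. Because $\sqrt{B^2-4AC}>|B|$, the numerator and denominator have the same sign (both negative), so the value is positive.
\item When $A>0$, the product of the roots is positive. Then $B=-\beta_1-\beta_2L$ must be negative for positive roots, which holds for large $|L|$. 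The formula picks $(-B-\sqrt{B^2-4AC})/(2A)$, the smaller positive root. I would argue this is the branch continuous in $L$ with $u^+(0)=0$.
\end{itemize}
Continuity of $u^+(L)$ at the transition $A=0$ is the delicate point. I would handle it by rewriting the chosen root in the rationalized form $2C/(-B+\operatorname{sign}(A)\sqrt{B^2-4AC})$. This form is regular at $A=0$ and shows $u^+\to C/(-B)>0$ there.

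Finally, $x=\sqrt{u^+}$, and reattaching the sign gives $\Phi^{-1}(p)\approx\operatorname{sign}(p-\tfrac12)\sqrt{u^+(L)}$. This inversion is exact with respect to the forward approximation \eqref{eq:SE_forward}. The stated quantile error bound is not something I would re-derive: I would take it from \citet{Soranzo2012}, treating it as a numerical property of the fitted constants.
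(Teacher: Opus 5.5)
Your route is the paper's: reduce to $x\ge0$ by symmetry, pass from $q=\sqrt{1-e^{E(u)}}$ to $E(u)=L$, clear the positive denominator to get $A u^2+Bu+C=0$, and cite \citet{Soranzo2012} for the accuracy. The one step where you go beyond the paper, the root selection, is where the argument fails.

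For $A<0$ and $C>0$ you have $\sqrt{B^2-4AC}>|B|\ge B$. So the numerator $-B+\sqrt{B^2-4AC}$ is \emph{positive}, not negative, and the displayed formula \eqref{eq:SE_u} with $\operatorname{sign}(A)=-1$ returns the \emph{negative} root. Concretely, take $u=1$, so $L=E(1)\approx-0.6274$, $A\approx-0.0742$, $B\approx-1.1806$, $C\approx1.2549$. The roots are $1$ and $C/A\approx-16.9$, and \eqref{eq:SE_u} gives $-16.9$. The positive root is $(-B-\sqrt{B^2-4AC})/(2A)=2C/\bigl(-B+\sqrt{B^2-4AC}\bigr)$, so the statement's formula is correct only with $+\operatorname{sign}(A)$ in place of $-\operatorname{sign}(A)$. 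What you planned to confirm is therefore false; a correct proof has to derive the corrected expression and flag the sign error. The paper's own proof does not catch this either. It merely asserts that the displayed root is positive, and its remark that $L\to-\infty$ as $p\to\tfrac12$ has the limit backwards, since $L\to0$ there.

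Your $A>0$ case is also wrong, and in fact vacuous. Since $B=-\beta_1+\beta_2|L|$, you get $B>0$ as soon as $|L|>\beta_1/\beta_2\approx8.6$, whereas $A\ge0$ requires $|L|\ge\alpha_1/\alpha_2\approx288$. In that regime $A\ge0$, $B>0$ and $C>0$, so $Au^2+Bu+C>0$ for all $u\ge0$ and there is no positive root at all. In particular your limit $C/(-B)$ at $A=0$ is negative ($B\approx41$ there), and $u^+(0)=0$ belongs to the $A<0$ regime, not this one.

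The clean way to close the argument is to show that $E$ is strictly decreasing on $[0,\infty)$. The numerator of $-E'$ is $2\beta_1+2\alpha_1u+\tfrac12(\alpha_1\beta_2-\alpha_2\beta_1)u^2>0$, because $\alpha_1\beta_2>\alpha_2\beta_1$. Hence $E$ maps $[0,\infty)$ bijectively onto $(-\alpha_1/\alpha_2,0]$. On this range $A<0$ (outside it there is no positive root, as just shown). The product of the roots $C/A\le0$ then gives exactly one nonnegative root, namely the corrected expression above, and no continuity argument at $A=0$ is needed.
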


\noindent
For OTM options, $c_{\mathrm{seed}}<\tfrac{1}{2}$ always, so
$q=1-2c_{\mathrm{seed}}$ and
\begin{equation}
  L = \log\!\bigl(4c_{\mathrm{seed}}(1-c_{\mathrm{seed}})\bigr),
  \label{eq:SE_L}
\end{equation}
which for small $c_{\mathrm{seed}}$ simplifies to
$L\approx\log 4+\log c_{\mathrm{seed}}$.

Substituting Proposition~\ref{prop:SE_inv} into the two-step
ratio-corrected seed gives a fully explicit expression in proposition \ref{prop:vq1_SE}.


\subsection{Proof of Proposition~\ref{prop:SE_inv}}
\label{proof:SE_inv}

\begin{proof}
Starting from equation~\eqref{eq:SE_forward} with $u=x^2\ge0$, let
$p=\Phi_{\mathrm{SE}}(x)$ for $x\ge0$.  Then
\[
  2p-1 = \sqrt{1-e^{E(u)}},
\]
squaring gives $e^{E(u)}=1-(2p-1)^2$, and taking logarithms:
\begin{equation}
  E(u) = \log\!\bigl(1-(2p-1)^2\bigr).
  \label{eq:SE_Eu_proof}
\end{equation}
Setting $q=2p-1\in[0,1)$ for $p\ge\tfrac{1}{2}$, equation
\eqref{eq:SE_Eu_proof} becomes $E(u)=L:=\log(1-q^2)$.
Substituting the rational form of $E$ from \eqref{eq:SE_Eu_proof}:
\[
  \frac{-\beta_1 u - \tfrac{1}{2}\alpha_1 u^2}{2+\beta_2 u
  + \tfrac{1}{2}\alpha_2 u^2} = L.
\]
Cross-multiplying and rearranging collects all terms in $u$:
\begin{equation}
  \underbrace{\bigl(-\tfrac{1}{2}\alpha_1 - \tfrac{1}{2}\alpha_2 L\bigr)}_{A}
  u^2
  +
  \underbrace{\bigl(-\beta_1 - \beta_2 L\bigr)}_{B}
  u
  +
  \underbrace{(-2L)}_{C}
  = 0,
  \label{eq:SE_quad_proof}
\end{equation}
which is precisely \eqref{eq:SE_ABC}.  Since $L<0$ we have $C>0$.
For $p$ close to $\tfrac{1}{2}$, $L\to-\infty$ and the coefficients
$A,B,C$ all grow, but the positive root $u^*=-B-\mathrm{sign}(A)\sqrt{B^2-4AC}$
over $(2A)$ remains bounded and positive. 
The case $p<\tfrac{1}{2}$ follows by symmetry of $\Phi$:
replace $p$ by $1-p$ to obtain $|x|=\sqrt{u^*}$, then apply the sign.
\end{proof}
\subsection{\texorpdfstring{Derivation of the deep-OTM initial seed $v_0$ via Mills-ratio cancellation}{Derivation of the deep-OTM initial seed v0 via Mills-ratio cancellation}}
\label{sec:mills}

\begin{lemma}[Ratio of Black--Scholes terms]
\label{lem:ratio}
For fixed $v>0$ and $k\to+\infty$,
\begin{equation}
  \frac{e^k\Phi(d_2)}{\Phi(d_1)}
  \;\approx\; \frac{|d_1|}{|d_2|}
  \;=\; \frac{k/v-v/2}{k/v+v/2}
  \;\longrightarrow\; 1.
  \label{eq:ratio}
\end{equation}
\end{lemma}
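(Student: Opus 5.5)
The plan is to combine two facts: an exact Gaussian-density identity linking $d_1$ and $d_2$, and the classical Mills-ratio asymptotics for $\Phi$ in its left tail. Fix $v>0$ and take $k>0$ large enough that $k/v>v/2$. Then both
\[
d_1=-\frac{k}{v}+\frac{v}{2}
\qquad\text{and}\qquad
d_2=-\frac{k}{v}-\frac{v}{2}
\]
are negative, with $|d_1|=k/v-v/2$ and $|d_2|=k/v+v/2$, and both tend to $-\infty$ as $k\to+\infty$.

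\textbf{Step 1: exact density identity.} Compute
\[
d_1^2-d_2^2=(d_1-d_2)(d_1+d_2)=v\cdot\left(-\frac{2k}{v}\right)=-2k .
\]
Hence
\[
\phi(d_2)=\phi(d_1)\,e^{-(d_2^2-d_1^2)/2}=\phi(d_1)\,e^{-k},
\qquad\text{i.e.}\qquad
e^k\phi(d_2)=\phi(d_1).
\]
This is the cancellation that makes the exponential factor $e^k$ disappear from the ratio.

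\textbf{Step 2: Mills-ratio tail expansion.} For $x\to-\infty$ I would invoke \citet{Mills26}, \citet{AbramowitzStegun64}:
\[
\Phi(x)=\frac{\phi(x)}{|x|}\bigl(1-x^{-2}+O(x^{-4})\bigr).
\]
This can be justified directly by integrating by parts in $\int_{-\infty}^x\phi(t)\,dt$, or by the elementary two-sided bound
\[
\frac{|x|}{1+x^2}\,\phi(x)\le\Phi(x)\le\frac{\phi(x)}{|x|}, \qquad x<0 .
\]
Applying the expansion to both $d_1$ and $d_2$ and using Step 1 gives
\[
\frac{e^k\Phi(d_2)}{\Phi(d_1)}
=\frac{e^k\phi(d_2)/|d_2|}{\phi(d_1)/|d_1|}\cdot\frac{1+O(d_2^{-2})}{1+O(d_1^{-2})}
=\frac{|d_1|}{|d_2|}\bigl(1+O(v^2/k^2)\bigr).
\]
This is the ``$\approx$'' in \eqref{eq:ratio}, made precise as an asymptotic equivalence with relative error $O(v^2/k^2)$. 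The two-sided bound also gives fully explicit upper and lower bounds on the ratio, which I would record alongside the asymptotic form.

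\textbf{Step 3: the limit.} Since
\[
\frac{|d_1|}{|d_2|}=\frac{k/v-v/2}{k/v+v/2}=1-\frac{v}{k/v+v/2}\longrightarrow 1
\qquad (k\to\infty,\ v\text{ fixed}),
\]
the ratio tends to $1$.

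As a consistency check, the same computation also yields \eqref{eq:mills-approx}. From $c_{\mathrm{BS}}=\Phi(d_1)\bigl(1-e^k\Phi(d_2)/\Phi(d_1)\bigr)$ one gets
\[
c_{\mathrm{BS}}\approx\Phi(d_1)\,\frac{v}{k/v+v/2}=\Phi(d_1)\,\frac{v^2}{\kappa+v^2/2},
\]
which I would note since it is the formula used by the seed.

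\textbf{Main obstacle.} The only delicate point is making the ``$\approx$'' rigorous uniformly, i.e.\ controlling the Mills remainders so that the relative error in the ratio is genuinely $O(1/d^2)$ and does not swamp the $O(v^2/k)$ deviation of $|d_1|/|d_2|$ from $1$. The leading deviation $1-|d_1|/|d_2|\sim v^2/k$ dominates the remainder $O(v^2/k^2)$, so this works. I would handle it with the explicit two-sided Mills bounds rather than the formal series.
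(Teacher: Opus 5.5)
Your proof is correct and follows the same route as the paper: the exact identity $e^k\phi(d_2)=\phi(d_1)$ combined with the Mills-ratio asymptotics $\Phi(z)\sim\phi(z)/|z|$ applied to both terms, then the elementary limit of $|d_1|/|d_2|$. Your extra step of bounding the remainder (relative error $O(v^2/k^2)$, which is dominated by the $v^2/k$ deviation of $|d_1|/|d_2|$ from $1$) goes beyond the paper's informal ``$\approx$''. This is exactly the control needed to justify the subsequent approximation \eqref{eq:mills}.
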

\begin{proof}
Apply the Mills ratio $\Phi(z)\sim\phi(z)/|z|$ as $z\to-\infty$ to both terms.
Since $\phi(d_2)=\phi(d_1)e^{-k}$ (by direct substitution of $d_2=d_1-v$
into the Gaussian exponent), one obtains
$e^k\Phi(d_2)/\Phi(d_1)\sim |d_1|/|d_2|\to 1$ as $k\to\infty$.
\end{proof}

\begin{remark}
Lemma~\ref{lem:ratio} shows that \emph{neither} term in the BS price is
negligible relative to the other for large $k$: the exponential growth of
$e^k$ is exactly cancelled by the faster decay of $\Phi(d_2)$ relative to
$\Phi(d_1)$.
This rules out any single-term simplification with asymptotic validity,
and in particular explains why the plain quadratic seed $c\approx\Phi(d_1)$
has seed error bounded away from zero uniformly in $k$.
\end{remark}

Retaining the leading Mills-ratio cancellation gives the approximation
\begin{equation}
  c_{\mathrm{BS}}(k,v)
  \approx \Phi(d_1)\!\left[1-\frac{|d_1|}{|d_2|}\right]
  = \Phi(d_1)\cdot\frac{v}{k/v+v/2}
  = \Phi(d_1)\cdot\frac{v^2}{\kap+v^2/2},
  \label{eq:mills}
\end{equation}
which is equation~\eqref{eq:mills-approx}.
Evaluating the correction factor at the quadratic seed $v_q$ and re-solving
the quadratic yields the ratio-corrected seed~\eqref{eq:vq1}.
\subsection{Reconciliation of seed error and price residual}
\label{sec:reconciliation}

The HH4 polishing loop terminates when the normalised Black--Scholes
price residual $|f(v)|=|c_{\mathrm{BS}}(k,v)-c|<10^{-14}$.  We
reconcile this price-residual criterion with the seed-error analysis
above.

\begin{lemma}[Vega--Lipschitz bound]
\label{lem:lipschitz}
Let $v^*$ be the exact implied volatility and $v_0$ any seed.  Then
\begin{equation}
  |f(v_0)|
  = |c_{\mathrm{BS}}(k,v_0)-c_{\mathrm{BS}}(k,v^*)|
  \le C\,|v_0-v^*|,
  \qquad
  C = \sup_{u\in[v_0,v^*]}\varphi(d_1(u)),
  \label{eq:lipschitz}
\end{equation}
where $\varphi=\Phi'$ is the standard normal density.  A global bound
is $C\le1/\sqrt{2\pi}$, and the local approximation is
$|f(v_0)|\approx\varphi(d_1)\,v^*\eta_v$.
\end{lemma}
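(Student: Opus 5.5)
The plan is to treat the call price as a smooth function of total volatility and apply the mean value theorem. Fix $k$ and write $g(u)=c_{\mathrm{BS}}(k,u)$ for $u>0$. Since $v^*$ is the exact implied volatility, $c=g(v^*)$, so $f(v_0)=g(v_0)-g(v^*)$. This gives the equality in \eqref{eq:lipschitz} directly.

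The first step is to compute the vega. Differentiate $g(u)=\Phi(d_1(u))-e^k\Phi(d_2(u))$ using $\partial d_1/\partial u=k/u^2+\tfrac12$ and $\partial d_2/\partial u=k/u^2-\tfrac12$. Then use the identity $e^k\varphi(d_2)=\varphi(d_1)$, which the paper already invokes in the proof of Lemma~\ref{lem:ratio}. It follows from $d_2^2-d_1^2=(d_2-d_1)(d_2+d_1)=(-u)(-2k/u)=2k$. With it the $k/u^2$ terms cancel, and we get $g'(u)=\varphi(d_1(u))$. This is the $f'=\phi(d_1)$ relation used for the HH-4 step in \eqref{eq:hh4-derivs}, and it is strictly positive.

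The second step is the mean value theorem on the interval between $v_0$ and $v^*$, both of which are positive. It gives $|g(v_0)-g(v^*)|=\varphi(d_1(\xi))\,|v_0-v^*|$ for some $\xi$ in that interval. Bounding $\varphi(d_1(\xi))$ by its supremum over the interval yields \eqref{eq:lipschitz} with the stated $C$. The global bound $C\le 1/\sqrt{2\pi}$ follows from $\varphi(x)\le\varphi(0)=1/\sqrt{2\pi}$ for all $x$.

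The local approximation comes from a first-order Taylor expansion of $g$ about $v^*$:
\[
f(v_0)=\varphi(d_1(v^*))(v_0-v^*)+O\bigl((v_0-v^*)^2\bigr).
\]
Writing $|v_0-v^*|=v^*\eta_v$ with $\eta_v=|v_0/v^*-1|$ then gives $|f(v_0)|\approx\varphi(d_1)\,v^*\eta_v$. The remainder is controlled by $g''=\varphi(d_1)\,d_1d_2/u$, the $\alpha$ of \eqref{eq:hh4-derivs}, so it is relatively $O(\eta_v)$ small when $v^*\eta_v\,|d_1d_2|/v^*$ is small.

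The only step needing care is the cancellation in the vega computation, which rests on the identity $e^k\varphi(d_2)=\varphi(d_1)$. The rest is routine. The "$\approx$" should be stated as holding to first order in $\eta_v$ and not claimed as an inequality.
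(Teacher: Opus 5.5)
Your proposal is correct and follows the paper's proof: the mean value theorem with $\partial c_{\mathrm{BS}}/\partial v=\varphi(d_1)$, followed by $\varphi\le 1/\sqrt{2\pi}$ for the global bound. You go slightly further than the paper by verifying the vega identity through $e^k\varphi(d_2)=\varphi(d_1)$ and by justifying the local approximation as a first-order Taylor expansion, both of which the paper only asserts.
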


\begin{proof}
By the mean-value theorem there exists $\xi$ between $v_0$ and $v^*$
such that
\[
  |c_{\mathrm{BS}}(k,v_0)-c_{\mathrm{BS}}(k,v^*)|
  = \left|\frac{\partial c_{\mathrm{BS}}}{\partial v}(\xi)\right|
    |v_0-v^*|
  = \varphi(d_1(\xi))\,|v_0-v^*|.
\]
The global bound follows from $\varphi(x)\le1/\sqrt{2\pi}$ for all
$x$.
\end{proof}

\noindent
In scaled form, Lemma~\ref{lem:lipschitz} gives
\[
  \frac{|f(v_0)|}{v^*\varphi(d_1)} \approx \eta_v.
\]
Therefore, if $\eta_v\le\eta_v^*$, the initial price residual satisfies
approximately $|f(v_0)|\lesssim\eta_v^*\,v^*\varphi(d_1)$, and quartic
HH4 convergence then drives the residual to $10^{-14}$ in at most two
net corrections.  The CDF truncation tolerance $\varepsilon_\Phi$
controls the seed quality and hence the initial price residual, while
the stopping criterion $10^{-14}$ is a separate code-level parameter.

\subsection{Small price and mild-OTM tail filter.}{\label{sec:priceandotmtail}}
The final implementation uses the observable guard
\[
  c_{\mathrm{seed}} < c^{*}= 0.02128,
  \qquad
  |k| > \kappa^{*}=0.5,
\]
to route such cases directly to the ratio-corrected Mills seed $v_{q,1}$.
This rule is specified and substantiated based on two complementary scales.

\paragraph{Price scale.}
By the mean-value theorem and the Lipschitz (global normalised vega) bound
from Lemma~\ref{lem:lipschitz},
\begin{equation}
  \left|c(k,v_0)-c(k,v)\right|
  \;\le\;
  \sup_v \varphi(d_1)\,|v_0-v|
  \;\le\;
  \varphi(0)|v_0-v|,
  \qquad
  \varphi(0)=\frac{1}{\sqrt{2\pi}}\approx0.399.
  \label{eq:lipschitz_guard}
\end{equation}
For the two-step HH-4 admissible relative seed error
$\eta_v^*\approx13.3\%$ (equation~\eqref{eq:hh4_admissible}),
the corresponding admissible price deviation is
\begin{equation}
  \varepsilon_c(v)
  \;\approx\;
  \varphi(0)\,\eta_v^*\,v
  \;\approx\;
  0.053\,v.
  \label{eq:price_scale}
\end{equation}
Evaluating at the representative moderate total volatility $v=0.4$,
\begin{equation}
  c_{\mathrm{seed}}^*
  \;=\;
  \varphi(0)\,\eta_v^*\,v
  \;=\;
  \underbrace{0.3989\times0.133}_{\approx\,0.0531}
  \times 0.4
  \;=\;
  0.02128.
  \label{eq:price_barrier}
\end{equation}
Hence the threshold $c_{\mathrm{seed}}<0.02128$ is consistent with the
HH-4 admissible price scale for moderate total volatilities: options
with OTM-equivalent price below this level have seed errors that
exceed the two-correction admissibility bound for polynomial seeds,
making the ratio-corrected Mills seed the preferred initialisation.

\paragraph{Moneyness scale.}
The condition $|k|>0.5$ is tied to the concave Taylor-CDF validity
boundary
\[
  k_{\max}(v;\,a_\eta) = a_\eta v - \tfrac{v^2}{2},
\]
whose peak over $v$ is $k_{\max}^{\mathrm{peak}}=a_\eta^2/2$
(Proposition~\ref{prop:concavity}).
Setting $k_{\max}^{\mathrm{peak}}=0.5$ gives $a_\eta=1$, which
corresponds to the CDF truncation tolerance
\begin{equation}
  \varepsilon_\Phi
  \;=\;
  |a_9|\,a_\eta^9
  \;=\;
  \frac{1}{3456\sqrt{2\pi}}
  \;\approx\;
  1.15\times10^{-4}.
  \label{eq:eps_phi_guard}
\end{equation}
This is a tolerance an order of magnitude stricter than the
$\varepsilon_\Phi=10^{-3}$ boundary that defines the lower edge of the
transition regime ($|k|=0.81$).  The condition $|k|>0.5$ therefore
identifies options that are already well into the moneyness range where
polynomial CDF seeds begin to lose accuracy, and where the Mills-ratio
seed is asymptotically appropriate.

\paragraph{Combined filter.}
The joint condition
\[
  c_{\mathrm{seed}}<c^{*},
  \qquad
  |k|>\kappa^{*}
\]
activates $v_{q,1}$ only when the option is simultaneously small in
OTM-equivalent price \emph{and} outside the mild-OTM moneyness scale.
Either condition alone is insufficient: a small price at $|k|<0.5$ may
simply reflect low volatility at moderate moneyness (where the
polynomial seed remains accurate), while a large $|k|>0.5$ with a
moderate price is handled correctly by the three-band architecture.
The intersection of the two conditions identifies the genuine corner
cases where polynomial seeds are poorly conditioned and the Mills seed
provides a materially better initialisation.
\subsection{Additional Empirical Results}
\begin{table}[ht]
\centering
\small
\caption{Seed-ratio and relative seed-error diagnostics. The near OTM filter uses
the small-price rule \(c_{\rm seed}<0.02128\) and \(|k|>0.5\), routing such cases
to the ratio-corrected Mills seed. Here
\(\eta_v=\left|v_0/v-1\right|\).}
\label{tab:seed-ratio-eta-summary}
\setlength{\tabcolsep}{7pt}
\renewcommand{\arraystretch}{1.15}
\begin{tabular}{lrrrr}
\toprule
\textbf{Statistic}
& \textbf{328 base}
& \textbf{328 + near OTM filt. }
& \textbf{1970 base}
& \textbf{1970 + near OTM filt.} \\
\midrule

\rowcolor{gray!15}
\multicolumn{5}{l}{\textbf{Seed ratio \(v_0/v\)}} \\
Mean \(v_0/v\)
& 0.9943
& \textcolor{ForestGreen}{\textbf{0.9986}}
& 1.0345
& \textcolor{ForestGreen}{\textbf{1.0104}} \\
Median \(v_0/v\)
& 0.9998
& 0.99995
& 1.0001
& 1.00008 \\
Std. \(v_0/v\)
& 0.1081
& \textcolor{ForestGreen}{\textbf{0.06}}
& 0.1602
& \textcolor{ForestGreen}{\textbf{0.091}} \\

\rowcolor{gray!15}
\multicolumn{5}{l}{\textbf{Relative seed error \(\eta_v=\left|v_0/v-1\right|\)}} \\
Mean \(\eta_v\)
& 5.27\%
& \textcolor{ForestGreen}{\textbf{3.69\%}}
& 7.19\%
& \textcolor{ForestGreen}{\textbf{4.31\%}} \\
Median \(\eta_v\)
& 1.82\%
& 1.42\%
& 1.69\%
& 1.52\% \\

\rowcolor{gray!15}
\multicolumn{5}{l}{\textbf{HH4 admissibility coverage}} \\
\(\eta_v<1\%\)
& 41.2\%
& 41.2\%
& 41.1\%
& 41.1\% \\
\(\eta_v<5\%\)
& 68.0\%
& \textcolor{ForestGreen}{\textbf{71.3\%}}
& 69.4\%
& \textcolor{ForestGreen}{\textbf{72.8\%}} \\
\rowcolor{green!15}
\(\eta_v<13.3\%\) admissible
& 90.5\%
& \textcolor{ForestGreen}{\textbf{94.5\%}}
& 88.1\%
& \textcolor{ForestGreen}{\textbf{93.9\%}} \\

\rowcolor{gray!15}
\multicolumn{5}{l}{\textbf{Per-regime mean \(\eta_v\)}} \\
ATM \((|k|<0.001)\)
& 0.074\%
& 0.074\%
& 0.074\%
& 0.074\% \\
\(v_{P7}\) \((|k|\leq0.81)\)
& 3.09\%
& \textcolor{ForestGreen}{\textbf{2.56\%}}
& 6.53\%
& \textcolor{ForestGreen}{\textbf{4.18\%}} \\
Average \((0.81,1.155)\)
& 7.50\%
& \textcolor{ForestGreen}{\textbf{4.13\%}}
& 16.72\%
& \textcolor{ForestGreen}{\textbf{4.42\%}} \\
\(v_{P3}\) \((1.155,1.347)\)
& 13.9\%
& 13.9\%
& 20.0\%
& \textcolor{ForestGreen}{\textbf{14.2\%}} \\
Mills deep
& 3.18\%
& 3.18\%
& 2.38\%
& 2.84\% \\
near OTM filter
& --
& \textcolor{ForestGreen}{\textbf{4.05\%}}
& --
& \textcolor{ForestGreen}{\textbf{2.93\%}} \\
\bottomrule
\end{tabular}
\end{table}

\newpage
\bibliographystyle{apalike}
\bibliography{biblio}

@book{AbramowitzStegun64,
editor    = {Abramowitz, Milton and Stegun, Irene A.},
year      = {1964},
title     = {Handbook of Mathematical Functions with Formulas, Graphs, and Mathematical Tables},
series    = {National Bureau of Standards Applied Mathematics Series},
number    = {55},
publisher = {U.S. Government Printing Office},
address   = {Washington, DC},
note      = {Ninth Dover printing, Dover Publications, New York, 1972}
}

@article{BS88,
author  = {Brenner, Menachem and Subrahmanyam, Marti G.},
year    = {1988},
title   = {A Simple Formula to Compute the Implied Standard Deviation},
journal = {Financial Analysts Journal},
volume  = {44},
number  = {5},
pages   = {80--83}
}

@article{CM96,
author  = {Corrado, Charles J. and Miller, Thomas W.},
year    = {1996},
title   = {A Note on a Simple, Accurate Formula to Compute Implied Standard Deviations},
journal = {Journal of Banking \& Finance},
volume  = {20},
number  = {3},
pages   = {595--603}
}

@misc{Gerhold2012,
author       = {Gerhold, Stefan},
year         = {2012},
title        = {Can There Be an Explicit Formula for Implied Volatility?},
eprint       = {1211.4978},
archivePrefix = {arXiv},
primaryClass = {q-fin.CP}
}

@article{Jackel2015,
author = {J{\"a}ckel, Peter},
year    = {2015},
title   = {Let's Be Rational},
journal = {Wilmott},
volume  = {2015},
number  = {75},
pages   = {40--53}
}

@article{Mills26,
author  = {Mills, John P.},
year    = {1926},
title   = {Table of the Ratio: Area to Bounding Ordinate, for Any Portion of Normal Curve},
journal = {Biometrika},
volume  = {18},
number  = {3--4},
pages   = {395--400}
}

@article{Li2005,
author  = {Li, Shuenn-Jyi},
year    = {2005},
title   = {A New Formula for Computing Implied Volatility},
journal = {Applied Mathematics and Computation},
volume  = {170},
number  = {1},
pages   = {611--625}
}

@article{Sandoval2019,
author  = {Sandoval-Hernandez, Mario A. and Vazquez-Leal, Hector and Filobello-Nino, Uriel and Hernandez-Martinez, Luis},
year    = {2019},
title   = {New Handy and Accurate Approximation for the Gaussian Integrals with Applications to Science and Engineering},
journal = {Open Mathematics},
volume  = {17},
number  = {1},
pages   = {1774--1793},
doi     = {10.1515/math-2019-0131}
}

@misc{Schadner2026,
author       = {Schadner, Wolfgang},
year         = {2026},
title        = {An Explicit Solution to Black--Scholes Implied Volatility},
eprint       = {2604.24480},
archivePrefix = {arXiv},
primaryClass = {q-fin.CP}
}

@article{SR2017,
author  = {Stefanica, Dan and Radoi{\v{c}}i{'c}, Rados},
year    = {2017},
title   = {An Explicit Implied Volatility Formula},
journal = {International Journal of Theoretical and Applied Finance},
volume  = {20},
number  = {7},
pages   = {1750048}
}

@article{Tehranchi2016,
author  = {Tehranchi, Michael R.},
year    = {2016},
title   = {Uniform Bounds for Black--Scholes Implied Volatility},
journal = {SIAM Journal on Financial Mathematics},
volume  = {7},
number  = {1},
pages   = {893--916}
}

@book{Traub82,
author    = {Traub, Joseph F.},
year      = {1982},
title     = {Iterative Methods for the Solution of Equations},
publisher = {Chelsea Publishing},
address   = {New York}
}

@book{Stoer02,
author    = {Stoer, Josef and Bulirsch, Roland},
year      = {2002},
title     = {Introduction to Numerical Analysis},
edition   = {3},
publisher = {Springer},
address   = {New York}
}

@article{CuiKirbyNguyenTaylor2021,
author  = {Cui, Zhenyu and Kirby, Justin and Nguyen, Duy and Taylor, Stephen M.},
year    = {2021},
title   = {A Closed-Form Model-Free Implied Volatility Formula Through Delta Families},
journal = {The Journal of Derivatives},
volume  = {28},
number  = {4},
pages   = {111--127},
doi     = {10.3905/JOD.2020.1.127}
}

@techreport{acklam2003,
author      = {Acklam, Peter J.},
year        = {2003},
title       = {An Algorithm for Computing the Inverse Normal Cumulative Distribution Function},
institution = {Department of Mathematics, University of Oslo}
}

@misc{Soranzo2012,
author       = {Soranzo, Alessandro and Epure, Elena},
year         = {2012},
title        = {Simply Explicitly Invertible Approximations to 4 Decimals of Error Function and Normal Cumulative Distribution Function},
eprint       = {1201.1320},
archivePrefix = {arXiv}
}

@article{BenaimFriz2009,
  author  = {Benaim, Shalom and Friz, Peter},
  year    = {2009},
  title   = {Regular Variation and Smile Asymptotics},
  journal = {Mathematical Finance},
  volume  = {19},
  number  = {1},
  pages   = {1--12},
  doi     = {10.1111/j.1467-9965.2008.00354.x}
}

@article{Gulisashvili2010,
  author  = {Gulisashvili, Archil},
  year    = {2010},
  title   = {Asymptotic Formulas with Error Estimates for Call Pricing Functions and the Implied Volatility at Extreme Strikes},
  journal = {SIAM Journal on Financial Mathematics},
  volume  = {1},
  number  = {1},
  pages   = {609--641},
  doi     = {10.1137/090762713}
}

@incollection{Gulisashvili2012,
  author    = {Gulisashvili, Archil},
  year      = {2012},
  title     = {Asymptotic Analysis of Implied Volatility},
  booktitle = {Analytically Tractable Stochastic Stock Price Models},
  series    = {Springer Finance},
  pages     = {243--272},
  publisher = {Springer},
  address   = {Berlin, Heidelberg},
  doi       = {10.1007/978-3-642-31214-4_9}
}

@article{RoperRutkowski2009,
  author  = {Roper, Michael and Rutkowski, Marek},
  year    = {2009},
  title   = {On the Relationship Between the Call Price Surface and the Implied Volatility Surface Close to Expiry},
  journal = {International Journal of Theoretical and Applied Finance},
  volume  = {12},
  number  = {4},
  pages   = {427--441},
  doi     = {10.1142/S0219024909005336}
}

@article{GaoLee2014,
  author  = {Gao, Kun and Lee, Roger W.},
  year    = {2014},
  title   = {Asymptotics of Implied Volatility to Arbitrary Order},
  journal = {Finance and Stochastics},
  volume  = {18},
  number  = {2},
  pages   = {349--392},
  doi     = {10.1007/s00780-013-0223-6}
}

@article{ChambersNawalkha2001,
  author  = {Chambers, Donald R. and Nawalkha, Sanjay K.},
  year    = {2001},
  title   = {An Improved Approach to Computing Implied Volatility},
  journal = {The Financial Review},
  volume  = {36},
  number  = {3},
  pages   = {89--100},
  doi     = {10.1111/j.1540-6288.2001.tb00021.x}
}

@article{ManasterKoehler1982,
  author  = {Manaster, Steven and Koehler, Gary},
  year    = {1982},
  title   = {The Calculation of Implied Variances from the Black--Scholes Model: A Note},
  journal = {The Journal of Finance},
  volume  = {37},
  number  = {1},
  pages   = {227--230},
  doi     = {10.1111/j.1540-6261.1982.tb01105.x}
}
\end{document}